\relax
%File: formatting-instruction.tex
\documentclass[letterpaper]{article} % DO NOT CHANGE THIS
  	\usepackage[hscale=0.7,scale=0.75]{geometry}
	\usepackage[numbers,sort&compress]{natbib}

%%%%%%%%%%%%%%%%%%%%%%%%%%%%%%%
% lualatex compatibility code %
%%%%%%%%%%%%%%%%%%%%%%%%%%%%%%%
\usepackage{ifluatex}
\ifluatex
\usepackage{luatex85}
\usepackage[OT1]{fontenc}
\fi
%%%%%%%%%%%%%%%%%%%%%%%%%%%%%%%

%\usepackage{aaai20}  % DO NOT CHANGE THIS
\usepackage{times}  % DO NOT CHANGE THIS
\usepackage{helvet} % DO NOT CHANGE THIS
\usepackage{courier}  % DO NOT CHANGE THIS
\usepackage[hyphens]{url}  % DO NOT CHANGE THIS
\usepackage{graphicx} % DO NOT CHANGE THIS
\urlstyle{rm} % DO NOT CHANGE THIS
  % DO NOT CHANGE THIS
\usepackage{graphicx}  % DO NOT CHANGE THIS
\frenchspacing  % DO NOT CHANGE THIS
%\setlength{\pdfpagewidth}{8.5in}  % DO NOT CHANGE THIS
%\setlength{\pdfpageheight}{11in}  % DO NOT CHANGE THIS
%\nocopyright
%PDF Info Is REQUIRED.
% For /Author, add all authors within the parentheses, separated by commas. No
% accents or commands.
% For /Title, add Title in Mixed Case. No accents or commands. Retain the
% parentheses.
\pdfinfo{
/Title (Structural Decompositions of Epistemic Logic Programs)
/Author (Markus Hecher, Michael Morak, Stefan Woltran)
} %Leave this	

\setcounter{secnumdepth}{2} %May be changed to 1 or 2 if section numbers are
                            %desired.

% The file aaai20.sty is the style file for AAAI Press 
% proceedings, working notes, and technical reports.
%
%\setlength\titlebox{2.5in} % If your paper contains an overfull \vbox too high
                           % warning at the beginning of the document, use this
                           % command to correct it. You may not alter the value
                           % below 2.5 in
\title{Structural Decompositions of Epistemic Logic Programs}
% Your title must be in mixed case, not sentence case. 
% That means all verbs (including short verbs like be, is, using,and go), 
% nouns, adverbs, adjectives should be capitalized, including both words in
% hyphenated terms, while
% articles, conjunctions, and prepositions are lower case unless they
% directly follow a colon or long dash
%\author{Submission \#5054}

\author{
Markus Hecher\textsuperscript{\rm 1,2}$,$
Michael Morak\textsuperscript{\rm 3}$,$\and
Stefan Woltran\textsuperscript{\rm 1}\\
\textsuperscript{\rm 1}TU Wien, Vienna, Austria,\\
\textsuperscript{\rm 2}University of Potsdam, Potsdam, Germany\\
\textsuperscript{\rm 3}University of Klagenfurt, Klagenfurt, Austria\\
$\{\text{hecher}, \text{woltran}\}$@dbai.tuwien.ac.at, michael.morak@aau.at}

%%%% AUTHORS %%%%
%\author{Written by AAAI Press Staff\textsuperscript{\rm 1}\thanks{Primarily Mike Hamilton of the Live Oak Press, LLC, with help from the AAAI Publications Committee}\\ \Large \textbf{AAAI Style Contributions by
%Pater Patel Schneider,} \\ \Large \textbf{Sunil Issar, J. Scott Penberthy, George Ferguson, Hans Guesgen}\\ % All authors must be in the same font size and format. Use \Large and \textbf to achieve this result when breaking a line
%\textsuperscript{\rm 1}Association for the Advancement of Artificial Intelligence\\ %If you have multiple authors and multiple affiliations
%% use superscripts in text and roman font to identify them. For example, Sunil Issar,\textsuperscript{\rm 2} J. Scott Penberthy\textsuperscript{\rm 3} George Ferguson,\textsuperscript{\rm 4} Hans Guesgen\textsuperscript{\rm 5}. Note that the comma should be placed BEFORE the superscript for optimum readability
%2275 East Bayshore Road, Suite 160\\
%Palo Alto, California 94303\\
%publications20@aaai.org % email address must be in roman text type, not monospace or sans serif
%}

%%%%%%%%%%%%
% packages %
%%%%%%%%%%%%
\usepackage{amsthm,amsmath,amssymb,wasysym,mathrsfs}
%\usepackage[]{todonotes}

%copied from prim.aux and eprim.aux

\newlabel{row:leaf}{{1}{}{}{}{}}
\newlabel{line:introduce}{{3}{}{}{}{}}
\newlabel{line:compat}{{4}{}{}{}{}}
\newlabel{line:join}{{7}{}{}{}{}}
\newlabel{alg:eprim}{{1}{}{}{}{}}
\newlabel{foot:abbrev}{{7}{}{}{}{}}

\newlabel{line:leafp}{{1}{}{}{}{}}
\newlabel{line:IntrUpdt}{{3}{}{}{}{}}
\newlabel{line:IntrSucc}{{4}{}{}{}{}}
\newlabel{line:IntrUpdtE}{{6}{}{}{}{}}
\newlabel{line:Condition23}{{8}{}{}{}{}}
\newlabel{line:Condition4}{{9}{}{}{}{}}
\newlabel{line:removeE}{{11}{}{}{}{}}
\newlabel{line:joinE}{{13}{}{}{}{}}
\newlabel{line:joinSurvive}{{14}{}{}{}{}}
\newlabel{alg:prim}{{2}{}{}{}{}}
\newlabel{foot:abbrevtwo}{{14}{}{}{}{}}
%\usepackage{zref-xr}
%\zxrsetup{toltxlabel=true, tozreflabel=false}
%\zexternaldocument*{eprim}
%\zexternaldocument*{prim}

%%%%%%%%%%
% macros %
%%%%%%%%%%
\newcommand{\nop}[1]{}

%%%%%%%%%%%%%%%%%%%%%%
% Scripts for thesis %
%%%%%%%%%%%%%%%%%%%%%%

% Environments

\newenvironment{changemargin}[2]{%
\list{}{\rightmargin#2\leftmargin#1
\parsep=0pt\topsep=0pt\partopsep=0pt}
\item[]}
{\endlist}

\newenvironment{indented}{\begin{changemargin}{1cm}{0cm}}{\end{changemargin}}

% Statement Environments

\newtheorem{theorem}{Theorem}
\newtheorem{corollary}[theorem]{Corollary}
\newtheorem{proposition}[theorem]{Proposition}
\newtheorem{lemma}[theorem]{Lemma}
\newtheorem{definition}[theorem]{Definition}
\newtheorem{example}[theorem]{Example}

{%
  \addtocounter{theorem}{-1}
  \endgroup
}%

% Letter Style

\let\phi\varphi
\let\epsilon\varepsilon

\renewcommand{\models}{\vDash}

%\newcommand{\qedsymbol}{\hfill\ensuremath{\square}}

% Mathematical Notation
\newcommand{\calA}{\mathcal{A}}

\newcommand{\calC}{\mathcal{C}}

\newcommand{\calI}{\mathcal{I}}

\newcommand{\calK}{\mathcal{K}}
\newcommand{\calL}{\mathcal{L}}
\newcommand{\calM}{\mathcal{M}}

\newcommand{\calO}{\mathcal{O}}

\newcommand{\calR}{\mathcal{R}}
\newcommand{\calS}{\mathcal{S}}
\newcommand{\calT}{\mathcal{T}}

\newcommand{\scrP}{\mathscr{P}}

 % natural numbers

% Algorithms
\newcommand{\algo}[1]{\ensuremath{\mathsf{#1}}}

% Computational Complexity

\newcommand{\SIGMA}[2]{\ensuremath{\Sigma_{\mathit{#1}}^{\mathit{#2}}}}

\newcommand{\bigO}[1]{\ensuremath{{\mathcal O}(#1)}}

% Graphs and Hypergraphs

\newcommand{\tw}[1]{\mathit{tw}(#1)}

% Turing Machines

% Problems Considered

% Domains

% First-Order Logic

\newcommand{\constant}[1]{\mathit{#1}}

\newcommand{\variable}[1]{\mathit{#1}}
\newcommand{\variables}[1]{{\mathbf{#1}}}

\newcommand{\mods}[1]{\mathit{mods}(#1)}
\newcommand{\answersets}[1]{\mathit{AS}(#1)}

\newcommand{\cwvs}[1]{\mathit{cwv}(#1)}
\newcommand{\wvs}[1]{\mathit{wv}(#1)}

\newcommand{\varR}{\variable{R}}

\newcommand{\varX}{\variable{X}}

\newcommand{\varZ}{\variable{Z}}
\newcommand{\varsN}{\variables{N}}
\newcommand{\varsP}{\variables{P}}
\newcommand{\varsU}{\variables{U}}

\newcommand{\varsX}{\variables{X}}
\newcommand{\varsY}{\variables{Y}}

% Relational Model

\newcommand{\relation}[1]{{\mathit{#1}}}

\newcommand{\fullatom}[2]{{\relation{#1}(#2)}}
\newcommand{\domof}[1]{\mathit{dom}(#1)}

\newcommand{\elitof}[1]{\mathit{atel}(#1)}
\newcommand{\atomof}[1]{\mathit{at}(#1)}

% Epistemic Operators
\newcommand{\eneg}{\mathbf{not}\,}

% Strong Negation

% Databases

% TGDs

\newcommand{\body}[1]{{\mathit{B}(#1)}}

\newcommand{\pbody}[1]{{\mathit{B}^+(#1)}}
\newcommand{\head}[1]{{\mathit{H}(#1)}}

% Description Logic

% Queries

% Guarded TGDs

% TD
\DeclareMathOperator{\type}{type}
\newcommand{\intr}{\textsf{intr}}
\newcommand{\leaf}{\textsf{leaf}}
\newcommand{\rem}{\textsf{rem}}
\newcommand{\join}{\textsf{join}}
\newcommand{\tab}[1]{\ensuremath{\tau(#1)}}

% Primal graph of ELPs
\newcommand{\connvert}[1]{\mathit{conn}(#1)}

% Algorithm Stuff: Colors, Shortcuts, ...
\newcommand{\tuplecolor}[1]{\textcolor{#1}}
\newcommand{\epistemiccolor}{orange!55!red}
\newcommand{\modelcolor}{blue!45!black}
\newcommand{\survivalcolor}{green!62!black}
\newcommand{\killcolor}{red!62!black}
\newcommand{\MAIR}[2]{\ensuremath{#1^+_{#2}}}%

\DeclareMathOperator{\EP}{\mathbb{E}}
\DeclareMathOperator{\GP}{\mathbb{P}}

\DeclareMathOperator{\updtCand}{updT}
\DeclareMathOperator{\intrCand}{intT}
\DeclareMathOperator{\updtCands}{intTs}

\DeclareMathOperator{\survivalSets}{succS}

\usepackage{multirow}
\usepackage{tikz}
\usetikzlibrary{positioning}
\usetikzlibrary{shapes}
\usetikzlibrary{calc}

\usepackage{colortbl}
\usetikzlibrary{positioning}

\tikzstyle{tdnode} = [draw,rounded corners,top color=vertexTopColor,bottom color=vertexBottomColor,minimum size=1.5em]
\tikzstyle{stdnode} = [tdnode, font=\scriptsize]
\tikzstyle{stdnodecompact} = [stdnode, inner sep = 1.5pt, outer sep = 0.1pt]
\tikzstyle{stdnodetable} = [stdnode, inner sep = 0.5pt, outer sep = 0]
\tikzstyle{stdnodenum} = [minimum size=1.5em, font=\scriptsize]
\tikzstyle{tdedge} = [-,draw,thick]
\tikzstyle{tdlabel} = [draw=none, rectangle, fill=none, inner sep=0pt, font=\scriptsize]
\colorlet{vertexTopColor}{white}
%\colorlet{vertexBottomColor}{blue!20!black!15}
\colorlet{vertexBottomColor}{black!10}

\begin{document}

\maketitle

\begin{abstract}
Epistemic logic programs (ELPs) are a popular generalization of standard Answer
Set Programming (ASP) providing means for reasoning over answer sets within the
language.  This richer formalism comes at the price of higher computational
complexity reaching up to the fourth level of the polynomial hierarchy.
However, in contrast to standard ASP, dedicated investigations towards
tractability have not been undertaken yet.  In this paper, we give first results
in this direction and show that central ELP problems can be solved in linear
time for ELPs exhibiting structural properties in terms of bounded treewidth.
We also provide a full dynamic programming algorithm that adheres to these
bounds. Finally, we show that applying treewidth to a novel dependency
structure---given in terms of epistemic literals---allows to bound the number of
ASP solver calls in typical ELP solving procedures.
\end{abstract}

\section{Introduction}\label{sec:introduction}

Epistemic logic programs (ELPs) \cite{lpnmr:GelfondP91}, also referred to as the
language of Epistemic Specifications \cite{aaai:Gelfond91}, have received
renewed attention in the research community as of late. ELPs are an extension of
the language of Answer Set Programming (ASP) \cite{cacm:BrewkaET11,ki:SchaubW18}
with epistemic operators. \citeauthor{aaai:Gelfond91}
(\citeyear{aaai:Gelfond91}) introduced the operators $\mathbf{K}$ and
$\mathbf{M}$ in order to represent the concepts of \emph{known to be true} and
\emph{may be true}, and defined an initial semantics. Several improvements to
the semantics have since been proposed in the literature \cite{lpnmr:Gelfond11,%
birthday:Truszczynski11,logcom:KahlWBGZ15,ijcai:CerroHS15}.
\citeauthor{ai:ShenE16} (\citeyear{ai:ShenE16}) realized that these two
operators can be represented via a single negation-type operator that they
called \emph{epistemic negation}, denoted $\eneg$, and gave a new semantics
based on this operator. \citeauthor{iclp:Morak19} (\citeyear{iclp:Morak19})
proposed a novel characterization of the central construct of the ELP semantics:
the \emph{world view}. While a recent analysis \cite{lpnmr:CabalarFC19} has shown
that this semantics still does not eliminate all existing flaws, we will make
use of it in this paper, since no clear ``winner'' semantics has as of yet
emerged, and our approach should be equally applicable to other existing
semantics that have been proposed.

Evaluating ELPs is a computationally hard task. The central decision problem,
checking whether an ELP has a world view, is \SIGMA{3}{P}-complete, and problems
even higher on the polynomial hierarchy exist \cite{ai:ShenE16,iclp:Morak19}. In
order to deal with this high complexity efficiently, we propose to use a method
from the field of parameterized complexity, namely, investigate how the runtime
behaves when looking at different structural parameters of the problem.  For
standard ASP, this topic has received considerable interest,
\cite{tocl:LoncT03,ai:GottlobPW10,ai:FichteS15,ecai:BliemOW16,amai:FichteKW19,lpnmr:FichteH19}.
However, the parameterized complexity of epistemic ASP has remained largely
unexplored so far. From the ASP case, we see strong evidence that this will be
the case for ELPs as well. In this paper, we will investigate, in particular,
whether ELPs can be solved efficiently if their treewidth (i.e., a measure for
the tree-likeness of graphs) is bounded.

It turns out that this question can be answered in the affirmative: the main
decision problems become tractable. In practice, a dynamic programming algorithm
on tree decompositions can be used to exploit this directly, similarly to what
was successfully proposed for ASP and QBF solvers \cite{lpnmr:FichteHMW17,%
ijcai:BliemMMW17,cp:FichteHZ19,fi:CharwatW19}.
However, we also aim to investigate a more interesting angle. Many ELP solvers
today work by making (up to exponentially many) calls to an underlying ASP
solving system in order to check world view existence. Being able to find a bound
on the number of these ASP solver calls would be very useful. Using
so-called %\emph{torso graphs} 
epistemic (primal) graphs
of ELPs that focus on epistemically negated
literals only, we can again employ treewidth to establish such bounds. This
novel use of structural decomposition intuitively works well in some interesting
cases including instances of the scholarship eligibility (SE)\footnote{Problem SE was a prime motivator for ELPs~\cite{aaai:Gelfond91}.} benchmark set, as provided with the system ``EP-ASP''~\cite{ijcai:SonLKL17}. %consist of one or more students that should be
%evaluated w.r.t.\ their eligibility regarding a scholarship. 

Using the
epistemic primal graph representation, these instances naturally decompose into their
individual sub-problems, that is, one sub-instance of the SE problem for each
student within the original instance.

\medskip\noindent\textbf{Contributions.} Our contributions are summarized below:
\begin{itemize}
  \item We investigate the complexity of the ELP world view existence problem
    when parameterized by the treewidth of the ELP instance. We establish that
    this problem is fixed-parameter tractable in this setting, and, in fact, can
    be solved in linear time if the treewidth is bounded from above by a
    constant. The same holds for the even
    more complex problem  of	world view formula evaluation.
  \item Then, we propose a novel graph representation of ELPs, namely, the %\emph{torso graph}, 
	  epistemic primal graph
	  and show how this can be exploited to bound the number of calls to
    an underlying ASP solver in a classical ELP solver setting. It turns out
    that the number of calls is bounded in case the epistemic primal graph has bounded
    treewidth.
  \item Finally, we provide a full dynamic programming algorithm that could be used in
    practice to directly exploit the tractability result above. We also show that
    %this 
    %
    %The 
    the 
    worst-case
    runtime of %our 
    this
    algoritm %provides% a matching upper bound to 
    %
    %a recent result in~\cite{FichteHecherPfandler19}, hence our
    %	that the runtime of our full dynamic programming 
    %algorithm
    cannot be significantly improved % in the worst case
    %and that it is indeed tight 
    under reasonable complexity-theoretic assumptions.
\end{itemize}
%\todo{mh: what about incidence graph? What about ELP problem on the fourth layer? SW: ignore prim/inc graph here. 4the layer problem added}

%\paragraph{Organization.} The remainder of the paper is organized as follows:
%todo (see if we need it/have space).

\section{Preliminaries}\label{sec:preliminaries}

\smallskip\noindent\textbf{Answer Set Programming (ASP).} A \emph{ground logic program} with
nested negation (also called answer set program, ASP program, or, simply, logic
program) is a pair $\scrP = (\calA, \calR)$, where $\calA$ is a set of
propositional (i.e., ground) atoms and $\calR$ is a set of rules of the form
%
%\begin{equation}\label{eq:rule}
  %
  $a_1\vee \cdots \vee a_l \leftarrow a_{l+1}, \ldots, a_m, \neg \ell_1, \ldots,
  \neg \ell_n$,
  %
%\end{equation}
%
where the comma symbol stands for conjunction, $0 \leq l \leq m$, $0 \leq n$,
$a_i \in \calA$ for all $1 \leq i \leq m$, and each $\ell_i$ is a
\emph{literal}, that is, either an atom $a$ or its (default) negation $\neg a$
for any atom $a \in \calA$.\footnote{In this case, we say that it is a literal
\emph{over} $\calA$.} Note that, therefore, doubly negated atoms may occur. We
will sometimes refer to such rules as \emph{standard rules}.  Each rule $r \in
\calR$ consists of a \emph{head} $\head{r} = \{
a_1,\ldots,a_l \}$ and a \emph{body} $\body{r} = \{a_{l+1},\ldots,a_m, \neg
\ell_1, \ldots, \neg \ell_n \}$, and is alternatively denoted by
$\head{r}\leftarrow \body{r}$. The \emph{positive} body is given by $\pbody{r} =
\{ a_{l+1}, \ldots, a_m \}$.
%
%\todo{mh: Wollen wir echt soviel space opfern?}
%The union of two logic programs $\scrP_i = (\calA_i, \calR_i)$, for $i \in \{ 1, 2
%\}$, is defined as $\scrP_1 \cup \scrP_2 = (\calA_1 \cup \calA_2, \calR_1 \cup
%\calR_2)$. 
Sometimes, we add a set of rules $\calR'$ to a logic program
$\scrP = (\calA, \calR)$. 
By some abuse of notation, let $\scrP \cup \calR'$ denote
the logic program $(\calA \cup \calA', \calR\cup \calR')$, where $\calA'$ is the set of atoms occurring in the rules of
$\calR'$.

An \emph{interpretation} $I$ (over $\calA$) is a set of atoms, that is, $I
\subseteq \calA$. A literal $\ell$ is true in an interpretation $I \subseteq
\calA$, denoted $I \models \ell$, iff $a \in I$ and $\ell = a$
%\todo{SW: add case for $\neg\neg a$?MM: fixed it in def of the reduct}
, or $a \not\in I$ and $\ell = \neg a$; 
otherwise $\ell$ is false in $I$, denoted $I \not\models
\ell$. Finally, for some literal $\ell$, we define that $I \models \neg \ell$ if
$I \not\models \ell$. This notation naturally extends to sets of literals. An
interpretation $M$ is called a \emph{model} of $r$, denoted $M \models r$, if,
whenever $M \models \body{r}$, it holds that $M \models \head{r}$. We denote the
set of models of $r$ by $\mods{r}$; the models of a logic program $\scrP=
(\calA,\calR)$ are given by $\mods{\scrP} = \bigcap_{r \in \calR} \mods{r}$. We
also write $I \models r$ (resp.\ $I \models \scrP$) if $I \in \mods{r}$ (resp.\ $I
\in \mods{\scrP}$).

The GL-reduct %$\scrP^I$ 
of a %(ground) 
logic program $\scrP = (\calA, \calR)$ with
respect to an interpretation $I$ is %the program 
given by $\scrP^I=(\calA, \calR^I)$ %, where
with
$\calR^I = \{ \head{r} \leftarrow \pbody{r} \mid r \in \calR, \forall (\neg
\ell) \in \body{r} : I \not\models \ell \}$.

\begin{definition}[\cite{iclp:GelfondL88,ngc:GelfondL91,amai:LifschitzTT99}]
  \label{def:answerset}
  $M \subseteq \calA$ is an \emph{answer set} of a logic program $\scrP$ if (1) $M
  \in \mods{\scrP}$ and (2) there is no subset $M' \subset M$ such that $M' \in
  \mods{\scrP^M}$.
\end{definition}

The set of answer sets of a logic program $\scrP$ is denoted by
$\answersets{\scrP}$. The \emph{consistency problem} of ASP, that is, to decide
whether for a given logic program $\scrP$ it holds that $\answersets{\scrP} \neq
\emptyset$, is \SIGMA{P}{2}-complete~\cite{amai:EiterG95}, and remains so also
in the case where doubly negated atoms are allowed in rule
bodies~\cite{tplp:PearceTW09}.

\smallskip\noindent\textbf{Epistemic Logic Programs.} An \emph{epistemic literal} is a formula
$\eneg \ell$, where $\ell$ is a literal and $\eneg$ is the epistemic negation
operator. A \emph{ground epistemic logic program (ELP)} is a pair $\Pi = (\calA,
\calR)$, where $\calA$ is a set of %propositional 
atoms and $\calR$ is a set of
\emph{ELP rules}, which are implications of the form
%
%\begin{equation*}
  %
   $a_1\vee \cdots \vee a_k \leftarrow \ell_1, \ldots, \ell_m, \xi_1, \ldots,
   \xi_j, \neg \xi_{j + 1},$ $\ldots, \neg \xi_{n}$,
  %
%\end{equation*}
%
where each $a_i$ is an atom from $\calA$, each $\ell_i$ is a literal over
$\calA$, and each $\xi_i$ is an \emph{epistemic literal} of the form $\eneg
\ell$, where $\ell$ is a literal over $\calA$. 
Similarly to logic programs, let $\head{r} = \{ a_1, \ldots, a_k \}$, and let
$\body{r} = \{ \ell_1, \ldots, \ell_m, \xi_1, \ldots, \xi_j, \neg \xi_{j+1},
\ldots, \neg \xi_{n} \}$. %that is, the set of elements appearing in the rule
%body. 
%
Further, $\atomof{r} \subseteq \calA$ denotes the set of atoms ocurring in
ELP rule $r$, and $\elitof{r} \subseteq \atomof{r}$ denotes the
set of atoms used in epistemic literals of $r$. These notions naturally
extend to sets of rules.

In order to define the semantics of an ELP, we will use the approach by
\citeauthor{iclp:Morak19} \cite{iclp:Morak19}, which follows the semantics
defined in \cite{ai:ShenE16}, but uses a different formal
representation. Note that, however, our results can be adapted to other
``reduct-based'' semantics, by changing the definition of the reduct
appropriately.
Given an ELP $\Pi = (\calA, \calR)$, a \emph{candidate world interpretation
(CWI)} $I$ for $\Pi$ is a consistent subset $I \subseteq \calL$, where $\calL$
is the set of all literals that can be built from atoms in $\calA$. Note that a
CWI $I$ naturally gives rise to a three-valued truth assignment to all the atoms
in $\calA$; hence, we will sometimes treat a CWI $I$ as a triple of disjoint
sets $\langle I^P, I^N, I^U \rangle$, where $I^P = \{ a \mid a \in I \cap \calA \}$, $I^N =
\{ a \mid \neg a \in I \}$ and $I^U = (\calA \setminus I^P) \setminus I^N$, with
the intended meaning that atoms in $I^P$, $I^N$, and $I^U$ are ``always true,'' ``always false,'' and ``unknown,'' respectively.

With the above definition in mind, we now define when a CWI is compatible with a
given set of interpretations.

\begin{definition}\label{def:compat}
  Let $\calI$ be a set of interpretations over a set of atoms $\calA$. Then, a
  CWI $I$ is \emph{compatible} with $\calI$ iff we have:
  \begin{enumerate}
    \item\label{def:compat:1} $\calI \neq \emptyset$;
    \item\label{def:compat:2} for each atom $a \in I^P$, it holds that for each
      %interpretation 
		  $J \in \calI$, $a \in J$;
    \item\label{def:compat:3} for each atom $a \in I^N$, we have for each
      %interpretation 
	    	$J \in \calI$, $a \not\in J$;
    \item\label{def:compat:4} for each atom $a \in I^U$, it holds that there are
	    %two interpretations 
	    $J, J' \in \calI$, such that $a \in J$, but $a \not\in J'$.
  \end{enumerate}
\end{definition}

The \emph{epistemic reduct} \cite{ai:ShenE16,iclp:Morak19} of program $\Pi =
(\calA, \calR)$ w.r.t.\ a CWI $I$, denoted $\Pi^I$, is defined as %follows:
%$\Pi^I = (\calA, \calR')$, where $\calR' = \{ r^I \mid r \in \calR \}$ and $r^I$
$\Pi^I = (\calA,  \{ r^I \mid r \in \calR \})$ where $r^I$
denotes rule $r$ where each epistemic literal $\eneg \ell$ is replaced by
$\neg \ell$ if $\ell \in I$, and by $\top$ otherwise.  Note that, hence, $\Pi^I$
is a plain logic program with all occurrences of epistemic negation
removed.\footnote{In fact, $\Pi^I$ may contain triple-negated atoms
$\neg\neg\neg a$.  %But, according to the definition of the GL-reduct in
According to 
\cite{amai:LifschitzTT99}, such formulas are equivalent to simple negated atoms
$\neg a$, and we treat them as such.}

Now, a CWI $I$ is a \emph{candidate world view (CWV)} of $\Pi$ iff 
$I$ is compatible with
the set~$\answersets{\Pi^I}$ of
answer sets. The set of CWVs of an
ELP $\Pi$ is denoted $\cwvs{\Pi}$. Following the principle of knowledge
minimization, furthermore $I$ is a \emph{world view (WV)} iff it is a CWV and
there is no proper subset $J \subset I$ such that $J \in \cwvs{\Pi}$. The set of
WVs of an ELP $\Pi$ is denoted $\wvs{\Pi}$.

One of the main reasoning tasks regarding ELPs is the \emph{world view existence
problem}, that is, given an ELP $\Pi$, decide whether $\wvs{\Pi} \neq \emptyset$
(or, equivalently, whether $\cwvs{\Pi} \neq \emptyset$). This problem is known
to be \SIGMA{3}{P}-complete \cite{ai:ShenE16,iclp:Morak19}. Another interesting
reasoning task is deciding, given an ELP $\Pi = (\calA, \calR)$ and an
arbitrary propositional formula $\phi$ over $\calA$, whether $\phi$ holds in
some WV, that is, whether there exists $W \in \wvs{\Pi}$ such that $W \models
\phi$. This \emph{formula evaluation problem} is even harder, namely
\SIGMA{4}{P}-complete \cite{ai:ShenE16}.

\smallskip\noindent\textbf{Tree Decompositions and Treewidth.} % 
We assume that graphs are undirected, simple, and free of self-loops. Let $G =
(V, E)$ be a graph, $T$ a rooted tree, and $\chi$ a labeling function that maps
every node $t$ of $T$ to a subset $\chi(t) \subseteq V$ called the \emph{bag} of $t$. The pair
$\calT = (T, \chi)$ is called a \emph{tree decomposition (TD)} \cite{jct:RobertsonS84} of $G$
iff (i) for each $v \in V$, there exists a $t$ in $T$, such that $v \in
\chi(t)$; (ii) for each $\{v, w\} \in E$, there exists $t$ in $T$, such that
$\{v, w\} \subseteq \chi(t)$; and (iii) for each $r, s, t$ of $T$, such that $s$
lies on the unique path from $r$ to $t$, we have $\chi(r) \cap \chi(t) \subseteq
\chi(s)$. 
For a node~$t$ of~$T$, we say that $\type(t)$ is $\leaf$ if $t$ has
no children and~$\chi(t)=\emptyset$; $\join$ if $t$ has children~$t'$ and $t''$ with
$t'\neq t''$ and $\chi(t) = \chi(t') = \chi(t'')$; $\intr$
(``introduce'') if $t$ has a single child~$t'$,
$\chi(t') \subseteq \chi(t)$ and $|\chi(t)| = |\chi(t')| + 1$; $\rem$
(``removal'') if $t$ has a single child~$t'$,
$\chi(t') \supseteq \chi(t)$ and $|\chi(t')| = |\chi(t)| + 1$. 
%\todo{SW: later we use ``$a$ is introduced/removed''. clear? SW2SW: yes, quite explicity stated in S4.}
If for
every node $t\in T$, %
$\type(t) \in \{ \leaf, \join, \intr, \rem\}$, 
%and for the root~$n$ of~$T$ we have~$\chi(n)=\emptyset$, 
then $(T,\chi)$ is called \emph{nice}.
%For ease of presentation, we work with nice TDs.
%
%
The \emph{width} of a TD is defined as the cardinality of its largest bag minus
one. The \emph{treewidth} of a graph $G$, denoted by $\tw{G}$, is the minimum
width over all TDs of $G$.  Note that if $G$ is a tree, then $\tw{G} = 1$.

%\todo{mh: define structures in appendix? mm: moved to mso chapter}

\smallskip\noindent\textbf{Monadic Second Order Logic and Courcelle's Theorem.}
Monadic Second Order logic (MSO) extends First Order logic (FO) with set
variables that range over sets of domain elements. Atomic MSO-formulas over a
signature $\sigma$ are either (1) atoms over some predicate in $\sigma$; (2)
equality atoms; or (3) atoms of the form $x \in S$, where $x$ is a FO variable,
and $S$ is a set variable. MSO-formulas are closed under FO operators. It is
convenient to use symbols like $\not\in$, $\subseteq$, $\subset$, $\cap$, or
$\cup$, with the obvious meanings as abbreviations for the corresponding MSO
(sub-)formulas. A $\sigma$-structure $\mathfrak{A}$ is a set of atoms over
predicates in $\sigma$. Let $\domof{\mathfrak{A}}$ denote its domain.

In order to exploit the structural information, we need to define how logical
structures can be represented as graphs, and how their treewidth is then defined:
Given a structure $\mathfrak{A}$ over some logical signature $\sigma$ of arity
at most two (sufficent for us), %\footnote{Arity two is sufficient for our purposes.}, 
we say
that the treewidth of $\mathfrak{A}$ equals $\tw{G_\mathfrak{A}}$, where
$G_\mathfrak{A} = (V, E)$ is a graph with $V = \domof{\mathfrak{A}}$
%\todo{SW: define $\domof{\cdot}$? MM: now defined above} 
and edge $\{a, b\} \in E$ iff $r(a, b) \in \mathfrak{A}$, where $r$ is some
relation in $\sigma$.
%\todo{SW: footnote that we don't need higher arity? MM: fixed above}

MSO formulas over structures of bounded treewidth are important in the context
of parameterized complexity in order to establish running time bounds, as the
following landmark theorem by \citeauthor{iandc:Courcelle90} shows:

\begin{theorem}[\cite{iandc:Courcelle90}]\label{thm:courcelle}
  Let $\phi$ be a fixed MSO formula over signature $\sigma$ and let
  $\mathfrak{A}$ be a $\sigma$-structure with $\tw{\mathfrak{A}} \leqslant k$,
  for some integer $k$.  Then, evaluating $\phi$ over $\mathfrak{A}$ can be done
  in time $O(f(k) \cdot |\mathfrak{A}|)$, for some function $f$ not depending on
  $|\mathfrak{A}|$.
\end{theorem}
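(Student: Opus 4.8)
The plan is to reduce MSO model checking on $\mathfrak{A}$ to the membership problem of a fixed bottom-up tree automaton running on a labeled tree of size $O(|\mathfrak{A}|)$. First I would exploit the treewidth bound to produce a usable decomposition: by Bodlaender's algorithm one can decide $\tw{G_\mathfrak{A}}\leqslant k$ and, if so, compute a tree decomposition of width at most $k$ in time $O(g(k)\cdot|\mathfrak{A}|)$ for some function $g$; a standard linear-time post-processing turns it into a \emph{nice} tree decomposition $(T,\chi)$ of the same width with $O(|\mathfrak{A}|)$ nodes. (A constant-factor linear-time treewidth approximation would do equally well; for an $f(k)\cdot|\mathfrak{A}|$ bound the exact width is not needed.) Next I would encode $(T,\chi)$ together with the arity-$\leqslant 2$ relations of $\mathfrak{A}$ as a node-labeled tree $\bar T$ over a \emph{finite} alphabet whose size depends only on $k$ and $\sigma$: at each node $t$ we store the isomorphism type of the substructure induced on $\chi(t)$ (naming bag elements abstractly by $\{1,\dots,k+1\}$) together with the overlap pattern between $\chi(t)$ and the parent's bag. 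Since every element and every tuple of $\mathfrak{A}$ lies in some bag, $\bar T$ determines $\mathfrak{A}$ up to isomorphism and $|\bar T|=O(|\mathfrak{A}|)$.

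The heart of the argument is a compositionality (Feferman--Vaught / Shelah-style) statement for MSO: for the fixed quantifier rank $q$ of $\phi$, the MSO $q$-type of $\mathfrak{A}$ (the finite set of rank-$\leqslant q$ MSO sentences it satisfies, representable by a single Hintikka sentence) can be computed bottom-up along $\bar T$, i.e.\ the $q$-type assigned to a node is a function, depending only on $q$, $k$, $\sigma$, of the $q$-types at the children and of the node's label. Equivalently, $\mathfrak{A}$ is MSO-interpretable inside $\bar T$, so $\mathfrak{A}\models\phi$ iff $\bar T$ satisfies a derived MSO sentence $\hat\phi$ over the tree signature. By the Doner--Thatcher--Wright theorem, the class of finite labeled trees satisfying $\hat\phi$ is recognized by a bottom-up tree automaton $\mathcal{A}_{\phi,k}$ whose number of states depends only on $\phi$, $k$ (and $\sigma$); this number is the function $f(k)$ (for fixed $\phi$). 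Running $\mathcal{A}_{\phi,k}$ on $\bar T$ takes time $O(|\mathcal{A}_{\phi,k}|\cdot|\bar T|)=O(f(k)\cdot|\mathfrak{A}|)$ and accepts iff $\mathfrak{A}\models\phi$. For formulas with free (set) variables one runs essentially the same automaton over the product of the alphabet with $\{0,1\}$ per free variable, so the guessed assignment is part of the input tree.

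The main obstacle is the compositionality step: proving that gluing two structures along a shared, bounded-size boundary preserves enough information to recover the rank-$q$ MSO type from the two pieces, and packaging this as a genuinely finite-state bottom-up computation. This rests on the finiteness of the number of MSO $q$-types (via Ehrenfeucht--Fra\"iss\'e games / normal forms for MSO) combined with a careful treatment of the boundary elements as named constants, so that ``glue along the boundary'' becomes MSO-definable. A secondary, purely quantitative point is that $|\mathcal{A}_{\phi,k}|$ is in general non-elementary in the quantifier rank of $\phi$; this is harmless here since $\phi$ is fixed, but it is exactly why the theorem is stated with an unspecified function $f$ rather than an explicit bound.
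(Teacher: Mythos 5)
The paper does not prove this statement: it is Courcelle's Theorem, imported verbatim from \cite{iandc:Courcelle90} and used as a black box to derive Theorems~\ref{thm:elpfpt} and~\ref{thm:elpentailmentfpt}, so there is no in-paper proof to compare against. Judged on its own terms, your plan is a correct and essentially complete outline of the standard proof: obtain a width-$k$ (nice) tree decomposition in FPT-linear time via Bodlaender's algorithm, re-encode the decomposition plus the bounded-arity relations as a labeled tree over a finite alphabet depending only on $k$ and $\sigma$, invoke compositionality of MSO $q$-types under gluing along a bounded-size named boundary (Feferman--Vaught/Shelah), and finish with the Doner--Thatcher--Wright translation to a bottom-up tree automaton run in time linear in the tree. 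You also correctly identify the two genuinely delicate points, namely that the boundary elements must be treated as named constants for the gluing step to be finite-state, and that the resulting state count is non-elementary in the quantifier rank, which is why $f$ is left unspecified. The only historical caveat worth noting is that Courcelle's original argument was phrased via recognizability of MSO-definable graph classes in the hyperedge-replacement algebra rather than explicitly via tree automata on decompositions, and it predates Bodlaender's linear-time decomposition algorithm; your automata-theoretic presentation is the equivalent, now-standard textbook route and proves the theorem exactly as stated here.
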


Problems with a parameter $k$ that can be solved in time $O(f(k) \cdot n^c)$, where~$c$ is a constant and $f$ only depends on $k$, are called
\emph{fixed-parameter tractable (FPT)} \cite{mcs:DowneyF99}.

\section{An MSO Encoding for ELPs}\label{sec:msoencoding}

The main objective in this section is to investigate how the semantics of ELPs
can be encoded in terms of an MSO formula and thereby investigate, from a
theoretical perspective, the time complexity of evaluating ELPs, specifically
looking at tree-like instances. 

Now, our goal will be to offer a fixed MSO encoding to exploit
Theorem~\ref{thm:courcelle}, in the spirit of \cite{ai:GottlobPW10}, which is
able to solve the world view existence problem for
an ELP by evaluating it over a suitable logical structure representing the ELP.
In order to begin the construction of this, we first need to fix the signature
over which our MSO encoding will be expressed. To this end, let signature\smallskip\\
$\sigma = \{ \relation{atom}, \relation{rule}, \relation{h}, \relation{b},
\relation{b^\neg}, \relation{b^\eneg}, \relation{b^{\eneg \neg}},
\relation{b^{\neg \eneg}}, \relation{b^{\neg \eneg \neg}} \}, $\smallskip\\ 
where $\fullatom{atom}{a}$ and $\fullatom{rule}{r}$ represent the fact that domain
elements $a$ and $r$ are an atom and a rule, respectively; where
$\fullatom{h}{a, r}$ represents that atom $a$ appears in the head of rule $r$;
and where $b^\square(a, r)$, with $\square \in \{ \epsilon, \neg, \eneg, \neg
\eneg,$ $\eneg \neg, \neg \eneg \neg \}$ and $\epsilon$ the empty word, represents
that fact that the sub-formula $\square a$, for atom $a$, appears in the
body of rule $r$. Next we construct the~encoding. 

\begin{lemma}\label{lem:msoencoding}
  %
%	\todo{SW: feels more like a lemma...\\MM: I had it this way because they
%	also stated the equivalent thing for ASP as a theorem in the Gottlob,
%	Pichler, Wei paper. Changed it to lemma now.}
  Consider the signature $\sigma$ above. WV existence can be
  expressed by means of a fixed MSO formula over $\sigma$.
\end{lemma}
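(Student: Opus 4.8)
The plan is to exhibit, once and for all, a single MSO sentence $\phi_{\mathrm{WV}}$ over $\sigma$ that holds in the structure $\mathfrak{A}_\Pi$ canonically associated with an ELP $\Pi$ (domain $=$ the atoms and rules of $\Pi$; $\relation{atom},\relation{rule},\relation{h}$ and the six body predicates $\relation{b},\relation{b^\neg},\relation{b^{\eneg}},\relation{b^{\eneg \neg}},\relation{b^{\neg \eneg}},\relation{b^{\neg \eneg \neg}}$ interpreted in the obvious way) if and only if $\wvs{\Pi}\neq\emptyset$. Since $\wvs{\Pi}\neq\emptyset$ iff $\cwvs{\Pi}\neq\emptyset$, it suffices to encode \emph{candidate} world view existence. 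A CWI is, via $I\mapsto(I^P,I^N)$, nothing but an ordered pair of disjoint sets of atoms, and $(P,N)\mapsto P\cup\{\neg a: a\in N\}$ is the inverse bijection onto consistent CWIs; hence $\phi_{\mathrm{WV}}$ will have the shape $\exists P\,\exists N\,\bigl(P,N\text{ disjoint subsets of }\{a:\relation{atom}(a)\}\ \wedge\ \mathit{comp}(P,N)\bigr)$, where $\mathit{comp}(P,N)$ asserts that the CWI coded by $(P,N)$ is compatible with $\answersets{\Pi^I}$ as in Definition~\ref{def:compat}. Throughout, all set quantifiers are relativized to subsets of $\{a:\relation{atom}(a)\}$, which is itself MSO-expressible.

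The core is a formula $\mathit{AS}(P,N,M)$ stating that $M$ is an answer set of the epistemic reduct $\Pi^I$. First I would write an FO formula $\mathit{pres}(r,P,N)$ saying that $r$ survives in $\Pi^I$, i.e.\ every $\relation{b^{\neg \eneg}}(a,r)$ has $a\in P$ and every $\relation{b^{\neg \eneg \neg}}(a,r)$ has $a\in N$ --- otherwise the replacement turns the body literal $\neg\eneg\ell$ into $\neg\top=\bot$ and the rule becomes vacuous. Next, $\mathit{bsat}(r,P,N,M)$ expresses that the body of $r^I$ is satisfied by $M$, by reading off, for each of the six body predicates, what the literal becomes: $\relation{b}(a,r)$ forces $a\in M$; $\relation{b^\neg}(a,r)$ forces $a\notin M$; $\relation{b^{\eneg}}(a,r)$ with $a\in P$ forces $a\notin M$ (and imposes nothing when $a\notin P$, since then it is $\top$); $\relation{b^{\eneg \neg}}(a,r)$ with $a\in N$ forces $a\in M$ (nothing otherwise); $\relation{b^{\neg \eneg}}(a,r)$, where necessarily $a\in P$, forces $a\in M$; and $\relation{b^{\neg \eneg \neg}}(a,r)$, where necessarily $a\in N$, forces $a\notin M$ (using $\neg\neg\neg a\equiv\neg a$). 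Then $M\models\Pi^I$ is $\forall r\bigl(\relation{rule}(r)\wedge\mathit{pres}(r,P,N)\wedge\mathit{bsat}(r,P,N,M)\rightarrow\exists a(\relation{h}(a,r)\wedge a\in M)\bigr)$. In the same style I would write $\mathit{keep}(r,P,N,M)$ saying a surviving $r^I$ is retained in the GL-reduct $(\Pi^I)^M$, i.e.\ every negative literal of $\body{r^I}$ is false in $M$ (again a finite case split over the six predicates, with $\neg\neg\neg a\equiv\neg a$). Crucially the positive body of $r^I$ coincides with $\pbody{r}$ --- no replacement ever yields a positive atom --- so rule membership in $(\Pi^I)^M$ is detected via $\relation{b}$ alone, giving $\mathit{red}(P,N,M,M')\equiv M'\subsetneq M\wedge\forall r\bigl(\relation{rule}(r)\wedge\mathit{pres}(r,P,N)\wedge\mathit{keep}(r,P,N,M)\rightarrow(\forall a(\relation{b}(a,r)\rightarrow a\in M')\rightarrow\exists a(\relation{h}(a,r)\wedge a\in M'))\bigr)$. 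Finally $\mathit{AS}(P,N,M)$ is ``$M\models\Pi^I$ and $\neg\exists M'\,\mathit{red}(P,N,M,M')$''; the $\SIGMA{2}{P}$ flavour of ``answer set'' is handled precisely by this negated set quantifier, which MSO permits.

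With $\mathit{AS}$ available, $\mathit{comp}(P,N)$ is a verbatim transcription of Definition~\ref{def:compat}: (1) $\exists M\,\mathit{AS}(P,N,M)$; (2) $\forall a\bigl(a\in P\rightarrow\forall M(\mathit{AS}(P,N,M)\rightarrow a\in M)\bigr)$; (3) like (2) with $a\in N$ and $a\notin M$; (4) $\forall a\bigl((\relation{atom}(a)\wedge a\notin P\wedge a\notin N)\rightarrow\exists M(\mathit{AS}(P,N,M)\wedge a\in M)\wedge\exists M'(\mathit{AS}(P,N,M')\wedge a\notin M')\bigr)$. The sentence $\phi_{\mathrm{WV}}=\exists P\exists N(\dots)$ is plainly \emph{fixed}: it mentions only the relation symbols of $\sigma$, never $|\Pi|$. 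Correctness then follows by unwinding definitions --- each sub-formula was built by literally spelling out ``epistemic reduct'', ``GL-reduct'', ``model'', ``answer set'', and ``compatible'' over $\sigma$, and the bijection above makes $\exists P\exists N\,\mathit{comp}(P,N)$ true exactly when $\cwvs{\Pi}\neq\emptyset$. The one place that needs genuine care --- and the main obstacle --- is this unwinding of the two reducts: one must keep straight, for each of the six body predicates, which (possibly multiply negated) literal the replacement produces when its rule survives and how its truth under $M$ (resp.\ $M'$) is phrased; this is a bounded case analysis with no deeper difficulty, and the resulting $\exists\exists\forall\cdots$ quantifier pattern is a reassuring match for the known $\SIGMA{3}{P}$-completeness of the problem.
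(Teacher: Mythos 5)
Your proposal is correct and follows essentially the same route as the paper: existentially quantify set variables encoding the CWI, transcribe the four compatibility conditions of Definition~\ref{def:compat}, and express answer-set-hood of the epistemic reduct by a finite case analysis over the six body predicates (the paper merely folds your $\mathit{pres}$/$\mathit{bsat}$/$\mathit{keep}$ decomposition into one $\relation{gl}$ disjunction with an explicit third set variable $\varsU$ and checks $M\models\Pi^I$ via $\relation{gl}(\varsX,\varsX,\ldots)$). One wording slip to fix: in $\mathit{keep}$, retention in the GL-reduct requires each \emph{inner} literal $\ell$ of a negated body literal $\neg\ell$ to be false in $M$ (i.e., $\neg\ell$ true in $M$), not the negative literal itself being false.
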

%\allowdisplaybreaks
\begin{proof}
  Recall that, in order to check the existence of a WV, it suffices to check the
  existence of a CWV (since WVs are simply subset-minimal CWVs). We will
  construct an MSO formula $\fullatom{cwv}{\varsP, \varsN, \varsU}$ with the
  intended meaning that it evaluates to true iff the input set variables
  $\varsP$, $\varsN$, and $\varsU$ represent a CWV $W$ with $W^P = \varsP$, $W^N
  = \varsN$, and $W^U = \varsU$. To this end, our formula will be of the
  following form:\smallskip\\
	$
   \fullatom{cwv}{\varsP, \varsN, \varsU} \equiv
	 \fullatom{cwi}{\varsP, \varsN, \varsU} \wedge \bigwedge_{i=1}^4 \fullatom{chk_i}{\varsP, \varsN, \varsU} 
	 $\smallskip\\
%  %
%  %
%  \begin{multline*}
%    %
%    \fullatom{cwv}{\varsP, \varsN, \varsU} \equiv\\
%    %
%    \fullatom{cwi}{\varsP, \varsN, \varsU} \wedge
%    \fullatom{chk_\ref{def:compat:1}}{\varsP, \varsN, \varsU} \wedge
%    \fullatom{chk_\ref{def:compat:2}}{\varsP, \varsN, \varsU} \\
%    %
%    \wedge \fullatom{chk_\ref{def:compat:3}}{\varsP, \varsN, \varsU} \wedge
%    \fullatom{chk_\ref{def:compat:4}}{\varsP, \varsN, \varsU},
%    %
%  \end{multline*}
%  %
  where $\relation{cwi}$ ensures that $\varsP$, $\varsN$, and $\varsU$ indeed
  encode a valid CWI (i.e., a three-partition of the set of atoms stored in
  $\relation{atom}$), and $\relation{chk_i}$ verifies that Condition~$i$ of
  Definition~\ref{def:compat} holds. We will now give the construction of
  these checks.

  First, the check for a valid CWI is expressed as follows:\medskip\\
  %
  %\begin{multline*}
    %
    $\fullatom{cwi}{\varsP, \varsN, \varsU} \equiv \forall \varX
    \left(\fullatom{atom}{\varX} \Leftrightarrow \varX \in \varsP \cup \varsN
    \cup \varsU\right) \wedge$ \\
    $\neg \exists \varX \left((\varX \in \varsP \cap \varsN) \vee (\varX \in
    \varsN \cap \varsU) \vee (\varX \in \varsP \cap \varsU) \right)$\medskip\\
    %
  %\end{multline*}
  %
  The four remaining checks have a similar structure:
  \noindent\begin{align*}
    % check 1
    \fullatom{chk_{\ref{def:compat:1}}}{\varsP, \varsN, \varsU} \equiv\ & \exists
    \varsX\, \fullatom{as}{\varsX, \varsP, \varsN, \varsU};\\[.1ex]
    %
    % check 2
    \fullatom{chk_{\ref{def:compat:2}}}{\varsP, \varsN, \varsU} \equiv\ & \forall
    \varX \left( \varX \in \varsP \Rightarrow \right.\\[-.25em]%\qquad\qquad\qquad\qquad
    & \left. \forall \varsX \left( \fullatom{as}{\varsX, \varsP, \varsN, \varsU}
    \Rightarrow \varX \in \varsX \right) \right);\end{align*}
    \noindent\begin{align*}%\\[-.25em]\\[.1ex]
    %
    % check 3
    \fullatom{chk_{\ref{def:compat:3}}}{\varsP, \varsN, \varsU} \equiv\ & \forall
    \varX \left( \varX \in \varsN \Rightarrow \right.\\[-.25em]
    & \left.\forall \varsX \left( \fullatom{as}{\varsX, \varsP, \varsN, \varsU}
    \Rightarrow \varX \not\in \varsX \right) \right);\\[.1ex]
    %
    % check 4
    \fullatom{chk_{\ref{def:compat:4}}}{\varsP, \varsN, \varsU} \equiv\ & \forall
    \varX \left( \varX \in \varsU \Rightarrow \right.\\[-.25em]
    & \left. \left( \exists \varsX \left( \fullatom{as}{\varsX, \varsP, \varsN,
    \varsU} \wedge \varX \in \varsX \right) \wedge \right. \right.\\[-.25em]
    & \left. \left. \exists \varsX \left( \fullatom{as}{\varsX, \varsP, \varsN,
    \varsU} \wedge \varX \not\in \varsX \right) \right) \right).
  \end{align*}
  The four checks encode precisely the conditions of
  Definition~\ref{def:compat}, where $\fullatom{as}{\varsX, \varsP, \varsN,
  \varsU}$ is a sub-formula, to be defined below, that expresses that $\varsX$
  is an answer set of the epistemic reduct w.r.t.\ the CWI represented by the
  sets $\varsP$, $\varsN$, and $\varsU$. For example,
  $\relation{chk_{\ref{def:compat:3}}}$ encodes that for each atom $\varX$ that is
  set to ``always false'' in the CWI (i.e., $\varX \in \varsN$), it must hold
  that for every stable model $\varsX$ of the epistemic reduct, $\varX$ must not
  be true in that stable model (i.e., $\varX \not\in \varsX$).

  It now remains to define the sub-formula for checking answer sets. This
  construction is based on the one presented in
  \cite[Theorem~3.5]{ai:GottlobPW10}, but adapted to take the computation of the
  epistemic reduct into account. Firstly, a set of atoms $M$ is an answer set if
  it is a model and no proper subset of $M$ is a model of the GL-reduct w.r.t.\
  $M$. This is expressed as follows (note that any model $M$ is also a model of
  its GL-reduct):\medskip\\
  %
  %\begin{multline*}
    %
    \makebox[1em]{}$\fullatom{as}{\varsX, \varsP, \varsN, \varsU} \equiv \fullatom{gl}{\varsX,
    \varsX, \varsP, \varsN, \varsU}$\\
    \makebox[5em]{}$\wedge \forall \varsY \left( \varsY \subset \varsX \Rightarrow \neg
    \fullatom{gl}{\varsX, \varsY, \varsP, \varsN, \varsU} \right).$\smallskip\\
    %
  %\end{multline*}
  %
  The intuitive meaning of $\fullatom{gl}{\varsX, \varsY, \varsP, \varsN,
  \varsU}$ is that it should hold iff $\varsY$ is a model of the GL-reduct
  w.r.t.\ $\varsX$ of the epistemic reduct w.r.t.\ the CWI represented by
  $\varsP$, $\varsN$, and $\varsU$:
  \begin{multline*}
    \fullatom{gl}{\varsX, \varsY, \varsP, \varsN, \varsU} \equiv \forall
    \varR \left( \fullatom{rule}{\varR} \Rightarrow \exists \varZ
    \left(\right.\right.\\[-.25em]
    %
    % head
    \left.\left.(\fullatom{h}{\varZ, \varR} \wedge \varZ \in \varsY)\right.\right.\\[-.25em]
    %
    % classic positive and negative body
    % 	=> evaluate negative body elements w.r.t. X
    \left.\left.\vee (\fullatom{b}{\varZ, \varR} \wedge \varZ \not\in \varsY) \vee
    (\fullatom{b^\neg}{\varZ, \varR} \wedge \varZ \in \varsX))\right.\right.\\[-.25em]
    %
    % epistemic negation of atom
    %	this turns into TRUE iff the atom is FALSE or UNKNOWN in the CWI
    % 		=> no way to satisfy the rule
    %   however, turns into default-negated atom if TRUE in the CWI
    %   	=> satisfies rule if TRUE in CWI and TRUE in model (\varsX)
    \left.\left.\vee (\fullatom{b^\eneg}{\varZ, \varR} \wedge \varZ \in \varsP \wedge \varZ
    \in \varsX)\right.\right.\\[-.25em]
    %
    % epistemic negation of negated atom
    % 	this turns into TRUE iff the atom is TRUE or UNKNOWN in the CWI
    %   	=> no way to satisfy rule
    %   however, turns into doubly default-negated atom if FALSE in the CWI
    %         	=> satisfies rule if FALSE in CWI and FALSE in model (\varsX)
    \left.\left.\vee (\fullatom{b^{\eneg\neg}}{\varZ, \varR} \wedge \varZ \in \varsN \wedge
    \varZ \not\in \varsX)\right.\right.\\[-.25em]
    %
    % negation of epistemic negation of atom
    % 	this turns into FALSE iff the atom is FALSE or UNKNOWN in the CWI
    %   	=> satisfies the rule (option 1)
    %   also, turns into doubly default-negated atom if TRUE in the CWI
    %   	=> satisfies rule if TRUE in CWI and FALSE in model (option 2)
    \left.\left.\vee (\fullatom{b^{\neg\eneg}}{\varZ, \varR} \wedge \left((\varZ \in \varsN
    \cup \varsU) \vee (\varZ \in \varsP \wedge \varZ \not\in \varsX))\right)\right.\right.\\[-.25em]
    %
    % negation of epistemic negation of negated atom
    % 	this turns into FALSE iff the atom is TRUE or UNKNOWN in the CWI
    %   	=> satisfies the rule (option 1)
    %   also, turns into triply default-negated atom if FALSE in the CWI
    %   	=> satisfies rule if FALSE in CWI and TRUE in model (option 2)
    \hspace{-.5em}\left.\left. \vee (\fullatom{b^{\neg\eneg\neg}}{\varZ, \varR} \wedge \left((\varZ
    \in \varsP \cup \varsU) \vee (\varZ \in \varsN \wedge \varZ \in
    \varsX))\right)\right)\right)%.
  \end{multline*}
  Note that the definition of the
  $\relation{gl}$-relation is such that it precisely mirrors the definition of
  both the epistemic reduct and the GL-reduct. It amounts to checking that every
  rule in the GL-reduct is satisfied, and amounts to a large case distinction,
  dealing with all seven cases of how atoms can appear in a rule (that is,
  either in the head, or nested under six combinations of default and epistemic
  negation in the body). For example, in line three, the first disjunct says
  that rule $\varR$ is satisfied if there is an atom $\varZ$ in the positive
  body, but this atom is not present in the reduct model $\varsY$ (satisfying
  rule $\varR$ by not satisfying the body). Line four treats the case of an
  epistemically negated atom $\varZ$ in the body. Such a rule is satisfied iff
  $\varZ$ is set to ``always true'' in the CWI (since otherwise the epistemic
  literal is replaced by $\top$ in the epistemic reduct, and the rule cannot be
  satisfied solely by this body element in this case), and $\varZ$ is false in
  the original model $\varsX$ (since in the epistemic reduct, the epistemic
  negation turns into default negation in this case, and default-negated atoms
  are evaluated over the original model $\varsX$). %and not over the reduct model
  %$\varsY$).

  This completes our MSO encoding. Correctness follows by construction, as
  explained above. In order to solve the WV existence problem via this encoding,
  we simply have to quantify the relevant set variables:\smallskip\\ 
  \makebox[.1\textwidth]{}$\phi = \exists
  \varsP \exists \varsN \exists \varsU\, \fullatom{cwv}{\varsP, \varsN, \varsU}.$\smallskip\\
   Evaluating this formula over a $\sigma$-structure $\mathfrak{P}$ that
  represents an ELP $\Pi$, we get that $\Pi$ has a CWV iff $\mathfrak{P} \models
  \phi$.%, as desired.
\end{proof}

With the above reduction in mind, lets take a closer look at what worst-case
solving time guarantees we can give for solving ELPs, in particular w.r.t.\
structural properties. Let $\Pi = (\calA, \calR)$ be an ELP, and let
$\mathfrak{P}$ be the $\sigma$-structure that represents it.  Recall that
$\tw{\mathfrak{P}} = \tw{G_\mathfrak{P}}$. In our case, $G_\mathfrak{P}$
coincides with the so-called \emph{incidence graph} of the ELP $\Pi$, a graph
representation that is well-known and studied in the literature for a wide range
of logic-based formalisms, and, in particular, for ASP
\cite{ijcai:JaklPW09,lpnmr:FichteHMW17}. The incidence graph of an ELP $\Pi$ is
the graph $G = (V, E)$ with $V = \calA \cup \calR$ and $\{a, r\} \in E$ iff atom
$a \in \calA$ occurs in rule $r \in \calR$ in $\Pi$. It is not difficult to
verify that the $\sigma$-structure $\mathfrak{P}$, when represented as a graph,
mirrors the incidence graph of $\Pi$ precisely. From this correspondence,
Theorem~\ref{thm:courcelle}, and Lemma~\ref{lem:msoencoding}, we thus
obtain the following, fundamental parameterized complexity result:

\begin{theorem}\label{thm:elpfpt}
  Let $\Pi$ be an ELP, let $G$ be its incidence graph, and let $\tw{G} \leqslant
  k$, for some integer $k$. Then, checking whether $\cwvs{\Pi} \neq \emptyset$
  can be done in time $O(f(k) \cdot |\Pi|)$, for some function $f$ that does not
  depend on $|\Pi|$.
\end{theorem}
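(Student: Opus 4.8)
The plan is to glue together the two tools that have just been established: the fixed MSO encoding of Lemma~\ref{lem:msoencoding} and Courcelle's theorem (Theorem~\ref{thm:courcelle}). First I would make precise the $\sigma$-structure $\mathfrak{P}$ associated with $\Pi = (\calA, \calR)$: its domain is $\calA \cup \calR$, the unary predicates $\relation{atom}$ and $\relation{rule}$ mark atoms and rules, and the binary predicates $\relation{h}, \relation{b}, \relation{b^\neg}, \ldots$ record, for each atom--rule pair, in which of the (at most seven) syntactic positions the atom occurs in the rule. This construction is clearly feasible in time linear in $|\Pi|$, and $|\mathfrak{P}| = O(|\Pi|)$.

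Next I would verify the treewidth correspondence. By the definition of the treewidth of a logical structure given above, $\tw{\mathfrak{P}} = \tw{G_\mathfrak{P}}$, where $G_\mathfrak{P}$ has vertex set $\domof{\mathfrak{P}} = \calA \cup \calR$ and an edge $\{a, r\}$ whenever some binary relation of $\sigma$ relates $a$ and $r$. Since every binary relation of $\sigma$ encodes the occurrence of an atom in a rule, $G_\mathfrak{P}$ is exactly the incidence graph $G$ of $\Pi$; hence $\tw{\mathfrak{P}} = \tw{G} \leqslant k$ (the unary predicates contribute only self-loops, which are disregarded).

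Finally, let $\phi = \exists \varsP \exists \varsN \exists \varsU\, \fullatom{cwv}{\varsP, \varsN, \varsU}$ be the MSO sentence over $\sigma$ from the proof of Lemma~\ref{lem:msoencoding}; crucially, $\phi$ is a single fixed formula that does not depend on $\Pi$. By that lemma, $\Pi$ has a CWV, i.e.\ $\cwvs{\Pi} \neq \emptyset$, iff $\mathfrak{P} \models \phi$; recall moreover that CWV existence is equivalent to WV existence. Applying Theorem~\ref{thm:courcelle} to the fixed formula $\phi$ and the structure $\mathfrak{P}$ of treewidth at most $k$, this check can be carried out in time $O(g(k) \cdot |\mathfrak{P}|)$ for some function $g$ independent of $|\mathfrak{P}|$. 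Composing with the linear-time construction of $\mathfrak{P}$ and absorbing all constants into a single function $f$ yields the claimed bound $O(f(k) \cdot |\Pi|)$.

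I do not expect a genuine obstacle here, since the substantive work was already done in Lemma~\ref{lem:msoencoding} and in the preceding discussion identifying $G_\mathfrak{P}$ with the incidence graph. The only points that require care are (a) confirming that the construction of $\mathfrak{P}$ is linear and that its associated graph $G_\mathfrak{P}$ really coincides with the incidence graph of $\Pi$, and (b) ensuring that the sentence fed to Courcelle's theorem is one fixed formula of size independent of $\Pi$ (so that the overhead is a function of $k$ alone, not of $|\Pi|$); both are immediate from the construction given above.
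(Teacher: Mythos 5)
Your proposal is correct and follows exactly the paper's own argument: it combines the fixed MSO sentence from Lemma~\ref{lem:msoencoding} with Courcelle's theorem (Theorem~\ref{thm:courcelle}), using the observation that the graph $G_\mathfrak{P}$ of the $\sigma$-structure representing $\Pi$ coincides with the incidence graph of $\Pi$, so that $\tw{\mathfrak{P}} = \tw{G} \leqslant k$. The only additions beyond the paper's (very terse) justification are the explicit remarks that $\mathfrak{P}$ is constructible in linear time and that $\phi$ is a single fixed formula, both of which are sound and worth making explicit.
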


Using a simple extension of the MSO construction in the proof of
Lemma~\ref{lem:msoencoding}, we can state a similar result for the 
formula evaluation problem.
%problem of deciding whether a propositional formula is true in some WV. 
The %following 
MSO
formula\medskip\\ %performs this check for a propositional fomula $\phi$.
%\todo[inline]{SW: why not $(\varsP'\cup  \varsN')\subset (\varsP\cup  \varsN)$
%below: MM: because \varsP and \varsN only contain atoms, not literals. So,
%mixing them up can lead to wrong outcomes (e.g. when one CWI is a=T, b=F, and
%the other is b=T... those are not in a subset-relationship, but would be with
%this formula).}
%\todo{SW: ok got it. but shouldn't $\varsN' \subseteq \varsN$ be then
%$\varsN' \subset \varsN$\\MM: true, that's a typo, don't know where that came
%from. fixed now.}
%
%\begin{multline*}
  %
  $\exists \varsP \exists \varsN \exists \varsU\, \fullatom{cwv}{\varsP, \varsN,
  \varsU} \wedge \fullatom{entails}{\varsP, \varsN, \varsU, \phi}$ \\
  $\wedge \neg \left( \exists \varsP' \exists \varsN' \exists \varsU'\, (\varsP'
  \subset \varsP \vee \varsN' \subset \varsN) \wedge \fullatom{cwv}{\varsP',
  \varsN', \varsU'} \right)$\smallskip\\
  %
%\end{multline*}
%
checks whether there is at least one WV that satisfies formula $\phi$, where the
atom $\fullatom{entails}{\varsP, \varsN, \varsU, \phi}$ encodes the check that
the WV represented by $\varsP$, $\varsN$, and $\varsU$ cautiously entails
formula $\phi$, a straightforward model-checking construction left to the
interested reader.  We obtain the following by the same argument as the one for
Theorem~\ref{thm:elpfpt}:

\begin{theorem}\label{thm:elpentailmentfpt}
  Let $\Pi$, $G$, and $k$ be as in Theorem~\ref{thm:elpfpt}, and let $\phi$ be a
  propositional formula. Then, checking whether $\Pi$ has a WV that cautiously
  entails $\phi$ can be done in time $O(f(k) \cdot |\Pi|)$, for some function
  $f$ that does not depend on $|\Pi|$.
\end{theorem}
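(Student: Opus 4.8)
The plan is to reuse the MSO construction of Lemma~\ref{lem:msoencoding} almost verbatim and add two ingredients---a subset-minimality check that upgrades ``CWV'' to ``WV'', and a cautious-entailment check for $\phi$---and then conclude exactly as in the proof of Theorem~\ref{thm:elpfpt} by invoking Theorem~\ref{thm:courcelle} on the $\sigma$-structure $\mathfrak{P}$. For the minimality part I would keep $\mathrm{cwv}$, $\mathrm{cwi}$, $\mathrm{as}$, and $\mathrm{gl}$ unchanged and append the block $\neg\bigl(\exists \varsP' \exists \varsN' \exists \varsU'\;(\varsP' \subseteq \varsP \wedge \varsN' \subseteq \varsN \wedge (\varsP' \subset \varsP \vee \varsN' \subset \varsN)) \wedge \mathrm{cwv}(\varsP',\varsN',\varsU')\bigr)$. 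Since the ``unknown'' component of any CWI is determined by its positive and negative parts, the inclusion $W' \subset W$ between the encoded CWIs (viewed as sets of literals) is equivalent to exactly the conjunction displayed inside, so this block states that no CWV is a proper subset of the CWI encoded by $\varsP,\varsN,\varsU$; together with the asserted $\mathrm{cwv}(\varsP,\varsN,\varsU)$ this says that this CWI is a WV, and it is the negated subformula shown in the excerpt with the subset relation between CWIs written out in full.

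For the entailment check, recall that a WV $W$ cautiously entails $\phi$ iff $\phi$ is true in every answer set of the epistemic reduct $\Pi^W$. Reusing the subformula $\mathrm{as}$ from Lemma~\ref{lem:msoencoding}, this becomes $\mathrm{entails}(\varsP,\varsN,\varsU) \equiv \forall \varsX\bigl(\mathrm{as}(\varsX,\varsP,\varsN,\varsU) \Rightarrow \mathrm{sat}(\varsX)\bigr)$, where $\mathrm{sat}(\varsX)$ must express that the total truth assignment given by $\varsX$ satisfies $\phi$. This is the one genuinely new construction: a \emph{fixed} MSO formula cannot mention the atoms or connectives of the input formula $\phi$, so $\phi$ has to be reified into the structure. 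Concretely, I would enlarge $\sigma$ by predicates describing the syntax tree of $\phi$---a child relation and a root marker on ``position'' elements, a unary predicate for the connective at each internal position, and a relation linking each leaf position to the atom of $\calA$ labelling it---and add the corresponding atoms to $\mathfrak{P}$. Then $\mathrm{sat}(\varsX)$ is the standard MSO formula-evaluation sentence: it existentially guesses the set $S$ of positions evaluating to \emph{true}, enforces local consistency everywhere ($\wedge$-positions in $S$ iff both children in $S$, $\vee$-positions in $S$ iff some child in $S$, $\neg$-positions in $S$ iff the child not in $S$, a leaf position labelled by atom $a$ in $S$ iff $a \in \varsX$), and asserts that the root lies in $S$. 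This sentence is fixed, and is the routine model-checking construction the excerpt leaves to the reader.

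It remains to observe that the treewidth bound survives. As in Theorem~\ref{thm:elpfpt}, the ELP part of $G_\mathfrak{P}$ is the incidence graph of $\Pi$, of treewidth at most $k$; attaching the syntax tree of $\phi$ introduces at most $O(|\phi|)$ fresh vertices, each of which can simply be placed into every bag of a width-$k$ tree decomposition, so the extended structure has treewidth at most $k + O(|\phi|)$. Hence the whole sentence $\exists \varsP \exists \varsN \exists \varsU\; \mathrm{cwv}(\varsP,\varsN,\varsU) \wedge \mathrm{entails}(\varsP,\varsN,\varsU) \wedge \neg\bigl(\exists\varsP'\exists\varsN'\exists\varsU'\;\dots\bigr)$ is a fixed MSO formula, and Theorem~\ref{thm:courcelle} yields evaluation in time $O(f(k)\cdot|\Pi|)$, where $f$ may additionally depend on $\phi$ but not on $|\Pi|$. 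The main obstacle is precisely the $\mathrm{sat}$ part: realising that $\phi$ must become part of the relational structure and be evaluated by the generic MSO tree-evaluation technique rather than being hard-wired into the formula, and verifying that this reification does not spoil the treewidth bound; everything else is a direct adaptation of Lemma~\ref{lem:msoencoding}.
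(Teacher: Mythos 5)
Your proposal follows the same route as the paper: extend the MSO encoding of Lemma~\ref{lem:msoencoding} with a subset-minimality block and an $\mathrm{entails}$ subformula, then invoke Theorem~\ref{thm:courcelle}; the paper merely displays the resulting sentence and leaves the entailment check ``to the interested reader,'' which you fill in via the standard reification of $\phi$'s syntax tree. Your version is in fact slightly more careful than the paper's (you correctly add the $\varsP'\subseteq\varsP\wedge\varsN'\subseteq\varsN$ conjuncts to the minimality check and make explicit that $f$ may depend on $\phi$), so the argument is correct and essentially identical in approach.
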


From the above theorems we immediately obtain the fact that ELPs of bounded
treewidth can be solved in linear time in the size of the ELP. 
%Note, however, that we do not know anything about the size of the function $f(k)$. 
We will
investigate how to exploit this result and pinpoint function $f(k)$ in the
next two sections.

\section{Bounding Calls to Standard ASP Solvers}\label{sec:torso}

Before providing a concrete algorithm for the FPT result in
Theorem~\ref{thm:elpfpt} we will investigate a more abstract approach. Many ELP
solvers today make use of standard ASP solvers to check the compatibility of a
CWI with the set of answer sets of its epistemic reduct. However, the number of
calls to such an ASP solver can be at worst exponential. In this section, we
will propose an algorithm that makes use of the structural relationships between
the epistemic literals in an ELP in order to control the number of ASP solver
calls needed and give finer-grained worst-case bounds on this number. In the next
section, we will then extend these concepts to a full dynamic programming
algorithm that exploits the result in Theorem~\ref{thm:elpfpt}. 

%In order to begin, we 
We first need to define the structural relationship between
atoms occurring in epistemic literals in an ELP. To this end, let $\Pi = (\calA,
\calR)$ be an ELP.  Then, the \emph{primal graph}~$\GP_\Pi = (V, E)$ of~$\Pi$ is
a graph with $V = \calA$ and $\{a,b\} \in E$ iff atoms $a$ and $b$ with $a\neq b$ appear
together in a rule in $\calR$, that is, iff $\exists r \in \calR : \{ a, b \}
\subseteq \atomof{r}$. Two vertices~$a,b$ in the primal graph are
\emph{non-epistemically connected} iff there is a path $\langle a, v_1, \ldots,
v_n, b \rangle$ with $n \geq 0$ in $\GP_\Pi$, such that each vertex $v_i$, $1
\leq i \leq n$, belongs to the set $\calA \setminus \elitof{\Pi}$. Now, the
\emph{epistemic primal graph}~$\EP_\Pi = (V, E)$ of~$\Pi$ is a graph with the
vertex set $V = \elitof{\Pi}$ being the set of atoms appearing in epistemic
literals in $\Pi$, and an edge $\{a, b\} \in E$ iff $a \neq b$ and vertices $a, b$
are non-epistemically connected in $\GP_\Pi$. Intuitively, two atoms from
$\elitof{\Pi}$ form an edge in $\EP_\Pi$ iff they are connected in $\GP_\Pi$ via
atoms that \emph{do not} appear in epistemic literals. The concept of epistemic
primal graph is inspired by the notion of the \emph{torso
graph}~\cite{stacs:GanianRS17}, which is used in parameterized complexity to
decompose certain abstraction %of a given graph.
graphs.

\begin{example}\label{ex:running2}
\label{ex:running1}
	Consider the classic \emph{scholarship eligibility problem}  encoding,
	first investigated by
	\citeauthor{aaai:Gelfond91}
	\cite{aaai:Gelfond91}:
  %(e.g.\
  %investigated in \cite{aaai:Gelfond91,lpnmr:Gelfond11,ai:ShenE16}), stated as
  %follows:
	\smallskip\\
  %
  %\begin{align*}
    %
    $\fullatom{eligible}{\varX} \gets \fullatom{highGPA}{\varX}$\\ %$\newline
    $\fullatom{ineligible}{\varX} \gets \fullatom{lowGPA}{\varX}$\\
    $\bot \gets \fullatom{eligible}{\varX}, \fullatom{ineligible}{\varX}$\\
    $\fullatom{interview}{\varX} \gets \eneg \fullatom{eligible}{\varX},
    \eneg \fullatom{ineligible}{\varX}$.\smallskip
    %
  %\end{align*}

  Now, assume the above abstract (non-ground) program is instantiated with two
  students (assume that it is copied twice and $\constant{mike}$ and
  $\constant{mark}$ are substituted for $\varX$), and that we add the following rules,
  resulting in ELP~$\Pi$:\smallskip\\
  %
  %\begin{align*}
    %
    $\fullatom{lowGPA}{\constant{mike}} \vee
    \fullatom{highGPA}{\constant{mike}}$\\
    $\fullatom{lowGPA}{\constant{mark}} \vee \fullatom{highGPA}{\constant{mark}}$\smallskip\\%\smallskip
    %
  %\end{align*}
  %
%\end{example}
%
%  Recall Example~\ref{ex:running1} and 
Epistemic primal graph~$\EP_\Pi$ contains only four nodes:
  $\fullatom{eligible}{\constant{mike}}$,
  $\fullatom{ineligible}{\constant{mike}}$, and the same two for
  $\constant{mark}$. Further, $\EP_\Pi$ does not have any edges
  except an edge between the two $\constant{mike}$-atoms, and the same for
  $\constant{mark}$. Since $\EP_\Pi$ forms a forest, 
  the treewidth of $\EP_\Pi$ is $1$. %, cf.~\cite{jal:BodlaenderK96}.
\end{example}

While the epistemic primal graph does not directly provide new
complexity results, it will allow us to give firm guarantees on the number of
ASP solver calls needed. As a side-effect, this algorithm is
conceptually simpler than the one of the next section, but prepares ideas for
later.

Algorithms that solve problems of bounded treewidth typically proceed by
\emph{dynamic programming (DP)}, bottom-up, along a TD where at each node~$t$ of the
TD information is gathered~\cite{jal:BodlaenderK96} in a table~$\tab{t}$.
A \emph{table}~$\tab{t}$ is a set of rows, where a \emph{row}~in $\tab{t}$ is a
fixed-length sequence of elements.
Tables are derived by an algorithm executed in each bag, called
\emph{bag algorithm}, which determine the actual content and meaning of the rows.  
Then, the DP approach~$\algo{DP}_\algo{B}$ for an epistemic logic program~$\Pi$ and
a given bag algorithm~$\algo{B}$ performs the following steps:
\begin{enumerate}%[leftmargin=*]
	\item Construct graph representation~$G$ of~$\Pi$ that is used by~$\algo{B}$.
	\item Heuristically compute a (nice) TD~$\calT=(T,\chi)$~of~$G$.
	\item\label{step:dp} Execute~$\algo{B}$ for every node~$t$ in TD $\calT$
	  in post-order.
  As input, $\algo{B}$ takes %as
  %input
  a node~$t$, a bag~$\chi(t)$, a \emph{solving program} (depending on~$\chi(t)$
  and~$\algo{B}$), which is the part of~$\Pi$ currently visible in $t$, and the tables %previously
  computed at children of~$t$. %, and
  Bag algorithm~$\algo{B}$ outputs a
  table~$\tab{t}$. %Therefore, we call $\algo{A}$ a \emph{table
   % algorithm}.
	\item Print the result by interpreting the table for root~$n$ of~$T$.
\end{enumerate}

%Therefore, we often explicitly
%state \emph{$\AlgA$-row} if rows of this \emph{type} are syntactically
%used for table algorithm~$\AlgA$ and similar \emph{$\AlgA$-table} for
%tables.
%
%For sake of comprehension, we
%specify the rows before presenting the actual table algorithm for
%manipulating tables.
%
%The rows used by a table algorithm~$\AlgS\in\{\PRIM,$ $\INC\}$ have in
%common that the first position of these rows manipulated by~$\AlgS$
%consists of an interpretation. The remaining positions of the row
%depend on the considered table algorithm.
%%
%For each sequence~$\vec \tabval \in \tab{}$, we
%write~$I(\vec \tabval)$ to address the interpretation (first) part of
%the sequence~$\vec\tabval$. Further, for a given positive integer~$i$,
%we denote by $\vec\tabval_{(i)}$ the $i$-th element of
%row~$\vec\tabval$ and define~$\tab{(i)}$ as
%$\tab{(i)}\eqdef\{\vec u_{(i)} \mid \vec u \in \tab{}\}$.
%
% In the context of dynamic programming, we usually restrict the
% interpretation to the content of the currently considered bag in
% order to obtain nice runtime bounds.
%
Next, we define a bag algorithm~\algo{EPRIM} for the epistemic primal graph
representation of~$\Pi$.  To this end, let~$\Pi = (\calA, \calR)$ be the
given input epistemic program, $\calT = (T, \chi)$ be a nice TD of~$\EP_\Pi$, $t$ be
a node of~$\calT$, and~$\prec$ be any arbitrary total ordering among the nodes
in~$\calT$.
To ease notation, for some set~$X\subseteq\elitof{\calR}$, let $\connvert{X}$ be
the set of vertices (i.e., atoms) from $\GP_\Pi$ that lie on a path that
non-epistemically connects any two vertices $a$ and $b$ in $X$.\footnote{Note
that we may have that $a = b$, and hence, $\connvert{\{ a \}}$ contains all
those vertices from $\calA \setminus \elitof{\Pi}$ connected to atom $a$.}
We now define the \emph{induced bag rules} for node $t$ of
$\calT$, denoted by $\calR_t^{\EP}$, as follows. For every rule $r \in \calR$,
$r$ is \emph{compatible} with node $t$ of $\calT$ iff  (a) $\atomof{r} \cap\connvert{\elitof{\calR}} \subseteq \connvert{\chi(t)}$, and (b)
$\chi(t)$ is subset-maximal among all nodes of $\calT$.
Now, $r \in
\calR_t^{\EP}$ iff $t$ is the $\prec$-minimal node among all nodes~$t'$ in $\calT$ with~$\type(t')=\intr$ compatible with $r$. 
The \emph{induced bag program}
for node $t$ is the ELP $\Pi_t^{\EP} = (\atomof{\calR_t^{\EP}},
\calR_t^{\EP})$.
%
%Observe that for every part (subgraph) of graph~$\EP_\Pi$ 
%that forms a complete graph (clique) among epistemic atoms, 
%it is guaranteed by the definition of tree decompositions, 
%that these epistemic atoms occur in at least one common bag~$\chi(t)$.
%By setting~$\calR_{t'}^{\EP}=\emptyset$ for all nodes~$t'$, 
%whose bag is subsumed by a non-subsumed node~$t$, 
%the definition of induced bag rules ensures 
%that no subset of~$\calR_t^{\EP}$ is evaluated,
%which guarantees that checking compatibility of Definition~\ref{def:compat} can be done by means of~$\Pi_t^{\EP}$.
%Then, ordering~$\prec$ on top %in the definition of induced bag programs 
%
%\todo{mm: the explanation of when rules are evaluated above, i feel, is a bit
%early, since we don't even know how the algorithm works yet.}

Observe that any set of vertices that form a clique within $\EP_\Pi$ will appear
together in some node $t$ of $\calT$. Note that for each node $t$ of $\calT$
that has not a non-subset-maximal bag, or has one, but is not $\prec$-minimal
for any compatible~$r\in\calR$, the
%\todo[inline]{mh:pls check, fixed it, as a node is only $\prec$-minimal for a rule.
%\todo[inline]{mm: fix lgtm. But the ``non-subset-minimal'' should have been
%``non-subset-maximal'' - changed it now.}
induced bag program is empty. Therefore, we have that for each rule~$r \in
\calR$ there is exactly one node $t$ of $\calT$ where~$r \in \calR_t^{\EP}$,
and, even more stringent, that each atom $a \in \atomof{r} \setminus \elitof{r}$
appears only in the induced bag program of $t$, but not in any other node.
The bag algorithm~\algo{EPRIM} uses the induced bag program as its solving
program, and, following the argument above, can check all rules containing atoms
from $\calA \setminus \elitof{\Pi}$ in a single node. Hence, during its
traversal of the tree decomposition, it does not need to store anything about
these atoms.
Instead, every row computed as part of a table by \algo{EPRIM} for a node~$t$,
called an \emph{epistemic row}, is of the form~$\langle I \rangle$, where~$I
\subseteq 2^{\{a, \neg a \mid a \in \chi(t) \}}$ is a \emph{partial CWI} (that
is, a CWI restricted to and defined w.r.t.\ $\chi(t)$)\footnote{This also means
that $J^P$, $J^N$, and $J^U$ are defined w.r.t.\ $\chi(t)$.}. For easy
identification, the CWI part of the rows in our algorithms will always be
printed using an orange color.
%
%Bag algorithm~\algo{EPRIM} uses the epistemic primal graph representation of~$\Pi$ and relies on the notion of an induced bag program for the solving program that is evaluated in each bag.
%which maps an epistemic atom in bag~$\chi(t)$ to 
%a value $\textsc{P}$, $\textsc{N}$, or~$\textsc{U}$, respectively. 
%A bag CWI function is analogously viewed as a set of mappings of the form $x\mapsto \textsc{P}$, $x\mapsto\textsc{N}$, or $x\mapsto\textsc{U}$, respectively.
%Observe that one can easily convert~$\alpha$ to a CWI~$I$ for~$\Pi_t^{\EP}$ 
%by~$I:=\langle\alpha^{-1}(\textsc{P}), \alpha^{-1}(\textsc{N}), \alpha^{-1}(\textsc{U})\rangle$.
%Hence, we use bag CWI functions and CWIs for induced bag programs interchangeably.

Listing~\ref{alg:eprim} presents algorithm~\algo{EPRIM}. For the ease of
presentation, it deals with nice TDs only, %(recall that each non-leaf node has
%either one or two children), 
but can be generalized to arbitrary TDs, requiring
a more involved case distinction.
Intuitively, since for each leaf node~$t$ we have~$\chi(t)=\emptyset$,
bag algorithm~\algo{EPRIM} ensures in Line~\ref{row:leaf} that~$\tab{t}$ consists only of the empty epistemic row.
Then, when an atom~$a$ appears in bag~$\chi(t)$ for a node~$t$, 
but does not occur in child bags,
CWI~$J$, with either~$a\in J^P$, $a\in J^N$, or~$a\in J^U$, is computed in Line~\ref{line:introduce}.
Further, if the solving program with rules ~$\calR_t^{\EP}$ is not empty, i.e., $t$ is the unique node responsible for evaluating all the rules in~$\calR_t^{\EP}$,
the four conditions of Definition~\ref{def:compat} are checked in Line~\ref{line:compat}. 
Note that these checks can be performed by calling a black-box ASP solvers a
limited number of times for
each row in~$t$:

{\center\includegraphics[width=.6\textwidth]{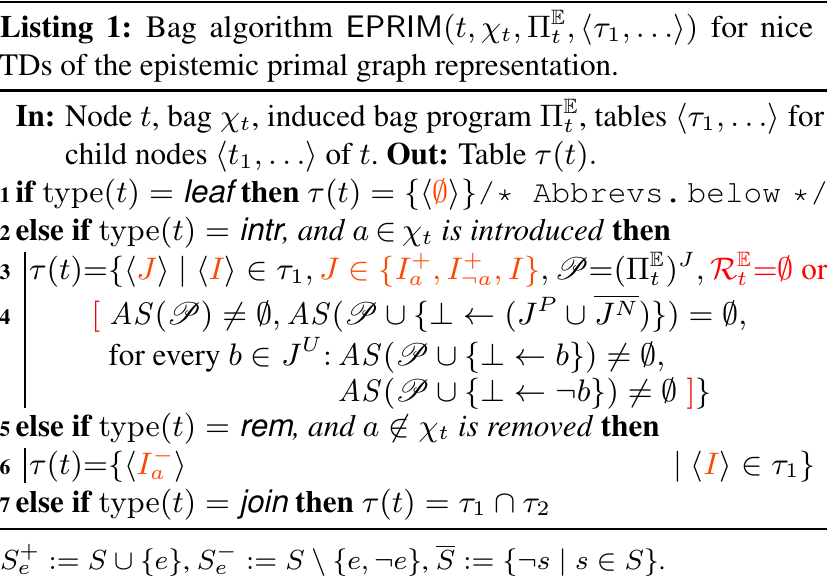}}

\begin{proposition}\label{prop:rowaspcalls}
  To compute a row in a table of \algo{EPRIM}, an ASP solver needs to be called
  at most~$2+2 \cdot |{\chi(t)}|$ times.
\end{proposition}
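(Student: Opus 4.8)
The plan is to count, for one epistemic row $\langle I\rangle$ produced by \algo{EPRIM} at a node $t$ with partial CWI $I=\langle I^P, I^N, I^U\rangle$ over $\chi(t)$, how many invocations of a black-box ASP solver are needed to decide whether $I$ satisfies the four conditions of Definition~\ref{def:compat}. These checks (Line~\ref{line:compat}) are the only place where \algo{EPRIM} consults the solver, and by the construction of $\calR_t^{\EP}$ every atom occurring in an epistemic literal of a rule in $\calR_t^{\EP}$ already lies in $\chi(t)$; hence the plain, epistemic-negation-free logic program $P := (\Pi_t^{\EP})^I$ (the epistemic reduct of the induced bag program w.r.t.\ $I$) over which the conditions are evaluated is completely determined by the row and obtainable without any solver call. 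Throughout, one ASP call means: ``does $P$, enlarged by $O(|\chi(t)|)$ integrity constraints and at most one fresh auxiliary atom, have an answer set?''---a single black-box query, irrespective of how many rules were added.

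Given this, I would verify the conditions as follows. Condition~\ref{def:compat:1} ($\answersets{P}\neq\emptyset$) costs one call: ask whether $P$ is consistent. Conditions~\ref{def:compat:2} and~\ref{def:compat:3} merge into a single call: add a fresh atom $v$, the rules $v\leftarrow\neg a$ for each $a\in I^P$ and $v\leftarrow a$ for each $a\in I^N$, and the constraint $\leftarrow\neg v$; the enlarged program is consistent iff some answer set of $P$ omits an atom of $I^P$ or contains an atom of $I^N$, so both conditions hold exactly when this call answers ``no''. Condition~\ref{def:compat:4} is \emph{per-atom}: for each $a\in I^U$ we need one answer set of $P$ containing $a$ and one omitting it, tested by the two calls ``$P\cup\{\leftarrow\neg a\}$ consistent?'' and ``$P\cup\{\leftarrow a\}$ consistent?'', i.e.\ $2\,|I^U|$ calls. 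Summing, $1+1+2\,|I^U|\le 2+2\,|\chi(t)|$, since $|I^U|\le|\chi(t)|$, which is the claimed bound; for node types other than $\intr$, and for $\intr$ nodes with $\calR_t^{\EP}=\emptyset$, no condition is tested, so no calls are made and the bound holds trivially.

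I expect the only non-routine point to be the correctness of these constraint-based reductions, i.e.\ that enlarging $P$ in the ways above neither discards nor invents answer sets: adding a constraint $\leftarrow L$ to a logic program retains exactly the answer sets satisfying $\neg L$ and creates none, while the auxiliary atom $v$ occurs only positively in rule heads and in $\leftarrow\neg v$, so the answer sets of the enlarged program are in bijection with the ``witnessing'' answer sets of $P$ (each extended uniquely by $v$). Once this is in place the count is immediate. A variant that avoids the auxiliary atom handles Conditions~\ref{def:compat:2} and~\ref{def:compat:3} atom-by-atom---one call ``$P\cup\{\leftarrow a\}$ consistent?'' per $a\in I^P$, and ``$P\cup\{\leftarrow\neg a\}$ consistent?'' per $a\in I^N$---giving the slightly larger but still sufficient bound $1+|\chi(t)|+|I^U|\le 1+2\,|\chi(t)|$.
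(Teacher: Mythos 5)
Your proposal is correct and follows essentially the same route the paper intends: the only solver invocations occur when checking the four conditions of Definition~\ref{def:compat} against the epistemic reduct of the induced bag program, and since $I^P$, $I^N$, $I^U$ partition a subset of $\chi(t)$, a per-condition count of consistency queries yields at most $2+2\cdot|\chi(t)|$ calls. Your packaging of Conditions~\ref{def:compat:2} and~\ref{def:compat:3} into a single auxiliary-atom query (versus one cautious-consequence check per atom of $I^P\cup I^N$) is a minor, valid variation that still lands within the stated bound.
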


On an abstract level, bag algorithm~\algo{EPRIM} hence provides a method for solving epistimic
logic programs~$\Pi$ by means of plain ASP solvers based on the structure of the epistemic literals of~$\Pi$.
Whenever an epistemic atom~$a$ is removed in node~$t$, indicating that~$a$ does
not occur in any ancestor bag of~$t$, information about the ``role'' of $a$ in
any CWI is not needed anymore.
Finally, for join nodes, Line~\ref{line:join} ensures that the CWIs in~$\chi(t)$ coincide with the ones that both child bags have in common.
The last step is the evaluation of the root node. If in the root a non-empty
table is computed by \algo{EPRIM}, then the input ELP $\Pi$ has a CWV.

\begin{example}\label{ex:running3}
  The epistemic primal graph from Example~\ref{ex:running2} is a ``best
  case''-scenario for using our TD-based approach: the TD naturally separates
  the ELP into one part each for the two students, and algorithm \algo{EPRIM}
  would evaluate the two completely separately, which is exactly what intuition
  would tell us to do.  However, standard ELP solvers seem to
  struggle in this setting when the number of students increases; cf.\ e.g.\
  \cite{ijcai:BichlerMW18}.
\end{example}

From Proposition~\ref{prop:rowaspcalls}, and the facts that there are only
linearly many TD nodes in the size of the input ELP $\Pi$, and that the
number of rows per tree node is at most exponential in the treewidth of $\EP_\Pi$, we obtain the following
statement:

\begin{theorem}\label{thm:aspcalls}
  Given an input ELP $\Pi$ of size $n$, algorithm $\algo{DP}_\algo{EPRIM}$ described above makes at
  most $\bigO{2^k \cdot n}$ calls to an underlying ASP solver, where $k =
  \tw{\EP_\Pi}$.
\end{theorem}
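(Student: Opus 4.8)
The plan is to bound the total number of ASP solver calls made by $\algo{DP}_\algo{EPRIM}$ as the product of three quantities: (i) the number of nodes in the nice tree decomposition $\calT$ of $\EP_\Pi$, (ii) the maximum number of rows stored in any single table $\tab{t}$, and (iii) the maximum number of ASP solver calls needed to produce one row, and then to argue that each of these is bounded as claimed. First I would invoke the standard fact (see, e.g., \cite{jal:BodlaenderK96}) that one may assume without loss of generality a nice tree decomposition $\calT$ of $\EP_\Pi$ whose number of nodes is linear in $|V(\EP_\Pi)| + |E(\EP_\Pi)|$, and observe that since $\EP_\Pi$ has at most $|\calA| \le n$ vertices and the graph is derived from $\Pi$, we have $O(n)$ many TD nodes altogether. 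This gives quantity (i) $\le c \cdot n$ for some constant $c$.

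Next I would bound quantity (ii). Every epistemic row in $\tab{t}$ is of the form $\langle J \rangle$ where $J$ is a partial CWI defined w.r.t.\ $\chi(t)$, i.e.\ a three-partition $\langle J^P, J^N, J^U \rangle$ of $\chi(t)$. Hence the number of possible rows in $\tab{t}$ is at most $3^{|\chi(t)|}$. Since the width of $\calT$ is $\tw{\EP_\Pi} = k$, we have $|\chi(t)| \le k+1$ for every node $t$, so $|\tab{t}| \le 3^{k+1}$, which is $\bigO{2^k}$ after absorbing the constant factor into the $\bigO{\cdot}$ and the base change (any constant base gives $2^{O(k)}$, and since the theorem's statement uses $\bigO{2^k \cdot n}$ with the implicit understanding that $2^k$ stands for $2^{\Theta(k)}$, this is fine; alternatively one notes $3^{k+1} = 2^{O(k)}$). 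Quantity (iii) is exactly Proposition~\ref{prop:rowaspcalls}: producing a single row costs at most $2 + 2\,|\chi(t)| \le 2 + 2(k+1) = \bigO{k}$ ASP solver calls. I would also note that rows in non-$\intr$ nodes (leaf, removal, join) require no ASP solver calls at all, since the induced bag program $\Pi_t^{\EP}$ is empty there and the table operations are purely combinatorial; only $\intr$ nodes with a nonempty $\calR_t^{\EP}$ trigger the compatibility checks of Definition~\ref{def:compat} in Line~\ref{line:compat}.

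Multiplying the three bounds, the total number of ASP solver calls is at most $(c \cdot n) \cdot 3^{k+1} \cdot (2 + 2(k+1))$, which is $\bigO{2^k \cdot n}$: the factor $k$ from quantity (iii) and the constant base change from $3$ to $2$ are absorbed into the $2^k$ term (treating $2^k$ as shorthand for $2^{\Theta(k)} = 3^{O(k)}$), leaving a clean product of an exponential-in-$k$ factor and a linear-in-$n$ factor. I do not expect any genuine obstacle here; the only point requiring a little care is making precise that the $3^{k+1}$ bound on table size and the extra $O(k)$ per-row factor are legitimately folded into the stated $\bigO{2^k \cdot n}$, which is standard in the parameterized-complexity literature where $2^k$ is routinely used to denote $2^{O(k)}$. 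A secondary point worth stating explicitly is that the linear bound on the number of TD nodes relies on using a nice TD of bounded width, for which the number of introduce/forget/join nodes is $O(k \cdot |V(\EP_\Pi)|) = O(k \cdot n)$ in the most pessimistic accounting; if one wants the cleaner $\bigO{2^k \cdot n}$ one simply absorbs this additional $k$ into the $2^k$ factor as well.
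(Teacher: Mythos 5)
Your proposal is correct and follows essentially the same route as the paper, which argues in one sentence that the bound follows from Proposition~\ref{prop:rowaspcalls}, the linear number of TD nodes, and the fact that the number of rows per node is at most exponential in $\tw{\EP_\Pi}$. Your additional care in making explicit that the $3^{k+1}$ table-size bound and the $O(k)$ per-row factor are absorbed into $2^{O(k)}$ only spells out what the paper leaves implicit.
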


Correctness of the algorithm presented above can be established along the same
lines as for established TD-based dynamic programming algorithms for ASP
\cite{ijcai:JaklPW09,lpnmr:FichteHMW17}. A more formal correctness argument will
be given in the next section.

%Correctness of algorithm~$\algo{DP}_\algo{EPRIM}$ follows
%from the correctness of a generalization of \algo{EPRIM} discussed as follows. %next.

%
%Given a CWI~$I$ and an epistemic program~$\Pi$.
%Then, we define $\compat(I, \Pi) := \text{true iff }$ \todo{mh: def of $\compat(I,\Pi)$ seems, as I actually originally thought, quite redundant to the prelims...}
%\begin{itemize}
%	\item $\answersets{\Pi^I}\neq\emptyset,$ 
%	\item $\answersets{\Pi^{I} \cup \{\leftarrow a_1,\ldots,a_{n}\}}=\emptyset$, where~$I^P=\{a_1,\ldots,a_n\}$
%	\item $\answersets{\Pi^{I} \cup \{\leftarrow \neg a_1,\ldots,\neg a_{n}\}}=\emptyset$, where~$I^N=\{a_1,\ldots,a_n\}$, and
%	%\item for each~$a\in I^N:$
%	%	$\answersets{\Pi^{I} \cup \{\leftarrow \neg a\}}=\emptyset$, and
%	\item for each $a\in I^U$:
%		\begin{itemize}
%			\item $\answersets{\Pi^{I} \cup \{\leftarrow a\}}\neq\emptyset$, as well as 
%			\item $\answersets{\Pi^{I} \cup \{\leftarrow \neg a\}}\neq\emptyset$.
%		\end{itemize}
%\end{itemize}

\section{A Full Dynamic Programming Algorithm}\label{sec:dynamicprogramming}

In this section, we will extend the \algo{EPRIM} algorithm in such a way that it
no longer relies on an underlying ASP solver, but solves an ELP completely on
its own, using dynamic programming.
This new algorithm, \algo{PRIM}, will operate on the primal graph, instead of on
the epistemic primal graph, %since it %now needs 
and
makes use of %knowledge about 
%structural 
features of 
the entire ELP
structure.

Recall that the primal graph is defined on all atoms of an ELP, instead of just
on the ones appearing in epistemic literals.
As a result, we need to define a different solving program for TD nodes.
To this end, for the remainder of the section, assume we are a given ELP $\Pi =
(\calA, \calR)$ to solve.
Further, let $\calT = (T, \chi)$ be a nice TD of the primal graph~$\GP_\Pi$ of $\Pi$,
and $t$ a node of $\calT$.
Then, the \emph{bag rules} for~$t$, denoted $\calR_t$ are defined as the set
$\{r \mid r \in \calR, \atomof{r} \subseteq \chi(t) \}$, that is, all the rules
of~$\Pi$ that are completely ``covered'' by~$\chi(t)$.
Further, the \emph{bag program} of~$t$ is defined as $\Pi_t = (\calA \cap
\chi(t), \calR_t)$.

In order to define \algo{PRIM}, we need to define what a row of a table for a TD
node $t$ looks like.
Since \algo{PRIM}, in contrast to \algo{EPRIM}, now also needs to compute the
answer sets underlying a CWV, we start with the following, preliminary
definition.
Let $M \subseteq \chi(t)$ be an interpretation and $\calC \subseteq 2^{\chi(t)}$ a
set of interpretations w.r.t.\ $\chi(t)$. 
Then, we refer to a tuple~$\langle M, \calC \rangle$ as an \emph{answer set
tuple}. 
This construct, proposed in ~\cite{lpnmr:FichteHMW17}, directly follows the
definition of answer sets as in Definition~\ref{def:answerset}, namely, (1) set
$M$, called a \emph{witness}, is used for storing (parts of) an answer set
candidate of some ASP program, and (2) set~$\calC$, called
\emph{counterwitnesses}, holds a set of (partial) models of the GL-reduct
w.r.t.\ $M$ that are potential subsets of $M$, and hence may be counter-examples
to $M$ being extendable to an answer set.
An answer set tuple with an empty set of counterwitnesses is referred to as
\emph{proving answer set tuple}, which, vaguely speaking, proves that $M$ 
can be indeed extended to an answer set of some ASP program, 
which the tuple was constructed for.
Answer set tuples are used by algorithm~\algo{PRIM} in order to ``transport''
information---in the form of parts of models restricted to the respective bags---of
already evaluated rules of the ASP program from the leaves towards the root
during TD traversal.

With this definition in mind, we are now ready to define a row for node~$t$ used
in algorithm~\algo{PRIM}.
Such a row, called \emph{primal row}, is of the form~$\langle
\tuplecolor{\epistemiccolor}{I}, \tuplecolor{\modelcolor}{\mathcal{M}},
\tuplecolor{\killcolor}{\mathcal{K}}, \tuplecolor{\survivalcolor}{\mathcal{S}}
\rangle$, where~$I$ (printed in orange) corresponds to a CWI restricted
to~$\chi(t)$ as in \algo{EPRIM}, and sets~$\calM, \calK, \calS$ consist of
answer set tuples, where $\calM$ (printed in blue) represents a set of possible
witness answer sets for Condition~\ref{def:compat:1} of
Definition~\ref{def:compat}, $\calK$ represents possible witnesses for
disproving Conditions~\ref{def:compat:2} and~\ref{def:compat:3}, and $\calS$
represents possible witnesses for guaranteeing Condition~\ref{def:compat:4}.
In the root node~$n$ of the TD, a specific primal row~$\vec{u} \in \tab{n}$ is
required in table~$\tab{n}$ to answer the question of WV existence of $\Pi$
positively, and~\algo{PRIM} is designed to maintain primal rows accordingly.
The set~$\calM$ of answer set tuples in $\vec{u}$ is used for ensuring Condition
(1) of Definition~\ref{def:compat}, where a proving answer set tuple in~$\calM$
gives rise to an answer set of some ASP program~$\Pi^{I'}$ of some
extension~$I'\supseteq I$ of~$I$.
For ensuring Conditions (2) and (3), the set~$\calK$ in~$\vec{u}$ shall not
contain any proving program tuples, i.e., proving program tuples of~$\calK$ are
required to vanish (get ``killed'') during the TD traversal, otherwise
Conditions (2) or (3) would be violated.
Finally, $\calS$ in~$\vec{u}$ serves to establish Condition (4), where no
non-proving answer set tuple is allowed, that is, each answer set tuple needs to
``survive''.
%
%The discussed properties of such a primal row~$\vec{u}$ are as
%follows.

\begin{definition}\label{def:proving}
  A primal row~$\langle I, \calM, \calK, \calS \rangle$ is \emph{proving} if (1)
  there is a~$\langle M, \calC \rangle \in \calM$ with~$\calC = \emptyset$, (2)
  there is no~$\langle M, \calC \rangle \in \calK$ with~$\calC = \emptyset$, and
  (3) there is no~$\langle M, \calC \rangle \in \calS$ with~$\calC \neq
  \emptyset$.
\end{definition}

{\center\includegraphics[width=.6\textwidth]{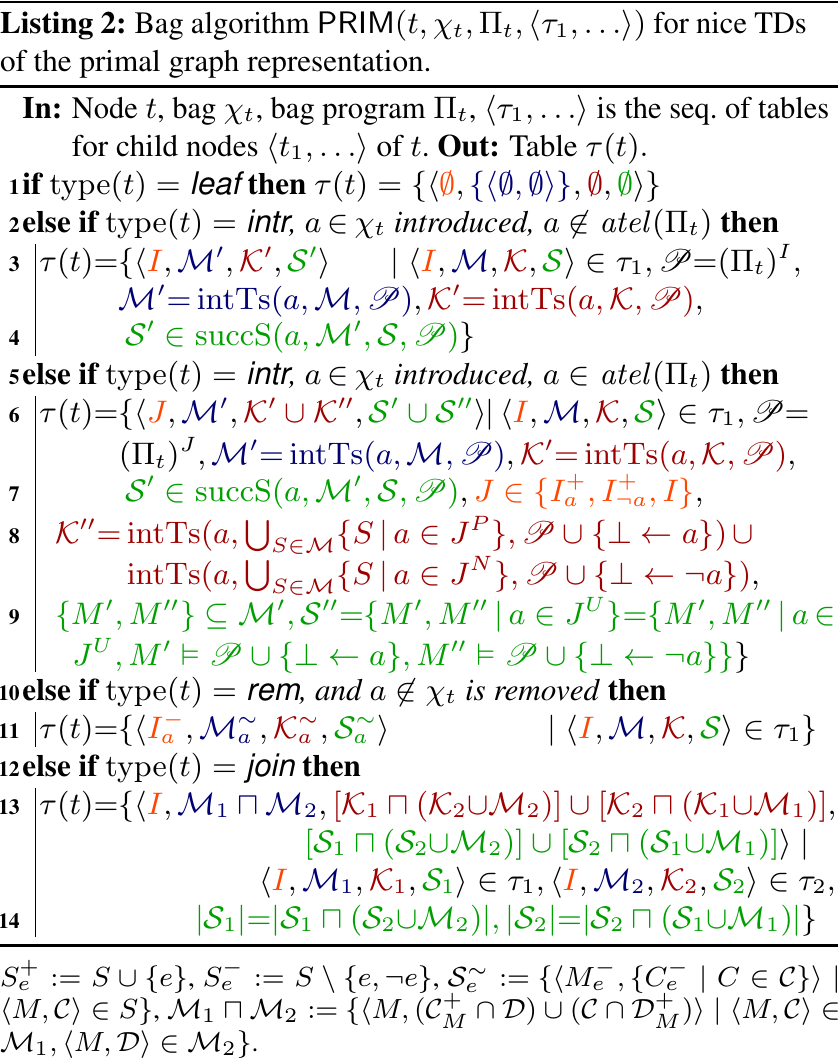}}

Algorithm~\algo{PRIM} is designed to ensure existence of such a proving primal
row in~$\tab{n}$ of root node~$n$ of the TD, iff a WV exists.
\algo{PRIM} uses the following constructs, assuming an answer set tuple~$\langle
M, \calC \rangle$, an atom~$a \in \calA$, and a program~$\scrP$.
For updating an answer set tuple, we let $\updtCand(M, \calC, \scrP) {\,=\,}
\{\langle M, \calC \cap \mods{\scrP^M} \rangle \mid M \models \scrP \}$.
When some atom~$a$ is introduced in an $\intr$-type node, we need to distinguish
between $a$ already being in the interpretation, or not.
We define $\intrCand(a, M, \calC, \scrP) = \updtCand(\MAIR{M}{a}, \bigcup_{C \in
\calC} \{ M, C, \MAIR{C}{a} \}, \scrP) \cup \updtCand(M, \calC, \scrP)$, which
is generalized to sets~$\calM$ of answer set tuples as follows:
%
%$\updtCands(a, \mathcal{M}, \Pi) := \updtCand(M, \mathcal{C}) \cup \bigcup_{\langle M, \mathcal{C}\rangle \in \mathcal{M}} \updtCand(\MAIR{M}{a}, \{M, C, \MAIR{C}{a} \mid C\in \mathcal{C}\})$
%
$\updtCands(a, \calM, \scrP) {\,=\,} \bigcup_{\langle M, \calC \rangle \in \calM}
\intrCand(a, M, \calC, \scrP)$.
Finally, to obtain good runtime bounds later and at the same time still ensure
Condition (4) of Definition~\ref{def:compat} using a set~$\calS$ of answer set
tuples, we need to find, for each answer set tuple in~$\calS$, exactly one
``succeeding'' answer set tuple among the set~$\calM$ of answer set tuples.
We formalize this by defining $\survivalSets(a, \calM, \calC, \scrP) = \{ \calS'
\mid \calS' \subseteq \calM, |\calS'| = |\calS|, \text{ for every } \langle M,
\calC \rangle \in \calS : \intrCand(a, $ $M, \calC, \scrP) \cap \calS' \neq
\emptyset \}$.

Bag algorithm~\algo{PRIM}, as presented in Listing~\ref{alg:prim}, again
distinguishes between different types of tree nodes during the post-order
traversal of $\calT$.
For leafs, Line~\ref{line:leafp} returns the primal row consisting of the empty
CWI, where the second component~$\calM$ contains only the proving answer set
tuple~$\langle \emptyset, \emptyset \rangle$ (since~$\emptyset$ is the smallest
model of the empty program), and~$\calK, \calS$ are both empty as there is no
need to remove or create answer set tuples, respectively.
If an atom~$a \not\in \elitof{\Pi}$ is introduced, Line~\ref{line:IntrUpdt}
updates~$\calM$ and $\calK$.
Line~\ref{line:IntrSucc} ensures that each answer set tuple in~$\calS$ has at
least one succeeding answer set tuple in every primal row of table~$\tab{t}$.
If an atom~$a \in \elitof{\Pi}$ is introduced, the sets~$\calM$, $\calK$,
and~$\calS$ are similarly updated in Line~\ref{line:IntrUpdtE}, but the three
cases (true, false, and unknown) need to be considered when adding $a$ to $I$.
%
%According to these three cases one has to additionally modify~$\calK$, and
%$\calS$ for ensuring Conditions (2), (3), and (4) of
%Definition~\ref{def:compat}, respectively.
%
Conditions (2) and (3) are handled in Line~\ref{line:Condition23}, where answer
set tuples in~$\calM$ that violate these two conditions (for $a \in J^P$, and $a
\in J^N$, respectively) are added to~$\calK$.
For Condition (4), Line~\ref{line:Condition4} ensures that if~$a\in J^N$, there
is both a succeeding answer set tuple where~$a$ is set to true, and one
where it is false.
If an atom~$a$ is removed, $a$ is removed from the primal rows in
Line~\ref{line:removeE}, since we have processed every part of~$\Pi$ where~$a$
occurs.
Finally, for join nodes, we combine only ``compatible'' primal rows in
Line~\ref{line:joinE}.
In particular, Line~\ref{line:joinSurvive} ensures that no answer set tuple is
lost in~$\calS_1$ or~$\calS_2$ of the child primal rows.
%
%STILL NEEDED?
%\todo{mh:fix}The crucial ingredient is here, that for checking the fourth condition, we are not allowed to store for each atom a set of witnesses and counterwitness set.
%However, we can store all the sets of witnesses and counterwitness set together for all atoms.
%In other words, for each atom, we need to decide upfront, which witness will survive (by killing all counterwitnesses at the same time). 
%Then, we just add more entries with the same epistemic guess, etc.
%In other words, it would be not that easy to count, however, just for deciding we should still be ETH-tight.
%
%$\updtMods(\mathcal{C}, \Pi) :=  \mathcal{C} \cap \mods{\Pi}$ % \{C \mid C\in \mathcal{C}, C \models \Pi\}$
%
%\todo{mh: evtl. doch Algo PRIM 2-spaltig praesentieren?}
%
%\todo{mh:explain}
%
%For an illustrative example, we refer to Appendix~\ref{sec:appendixex}.

In the following, we briefly mention correctness and runtime bounds for bag
algorithms~\algo{PRIM} and~\algo{EPRIM}.

\begin{theorem}[Correctness]\label{thm:primcorrectness}
  Let $\Pi$ be an ELP~$\Pi$, and $\calT = (T, \chi)$ a TD of
  $\GP_\Pi$. %, and $\tab{n}$ be the table for
  %root~$n$ of~$\calT$ obtained by $\algo{DP}_\algo{PRIM}$.
  %
  Then there is a proving row in table~$\tab{n}$ obtained by $\algo{DP}_\algo{PRIM}$ for
  root~$n$ of~$\calT$ %obtained by $\algo{DP}_\algo{PRIM}$ 
  iff there is a WV
  for~$\Pi$.
\end{theorem}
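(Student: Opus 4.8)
The plan is to prove a stronger, \emph{node-local invariant} by induction on the tree decomposition~$\calT$, from which the root statement follows as a special case. For each node~$t$ of~$\calT$, let $\Pi_{\leq t}$ denote the sub-ELP consisting of all bag programs~$\Pi_{t'}$ for nodes~$t'$ in the subtree rooted at~$t$ (equivalently, all rules of~$\Pi$ already ``covered'' somewhere below~$t$, together with the atoms appearing in them). The invariant I would state says: a primal row $\langle I, \calM, \calK, \calS\rangle$ is in $\tab{t}$ iff there is a \emph{partial CWI}~$\hat I$ of~$\Pi_{\leq t}$ restricting to $I$ on~$\chi(t)$ such that (i)~$\calM$ is exactly the set of answer set tuples $\langle M,\calC\rangle$ with $M\subseteq\chi(t)$ extendable, via interpretations over the already-forgotten atoms, to models of the epistemic-plus-GL-reduct of $\Pi_{\leq t}$ w.r.t.\ some extension of~$\hat I$, and $\calC$ the corresponding counterwitness models below~$M$; and similarly (ii)~$\calK$ collects precisely those witnesses that \emph{already} violate Condition~\ref{def:compat:2} or~\ref{def:compat:3} somewhere below~$t$, and (iii)~$\calS$ tracks the witnesses needed to eventually fulfil Condition~\ref{def:compat:4} for the unknown atoms forgotten so far. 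I would lift the standard ASP-on-tree-decompositions semantics of \cite{lpnmr:FichteHMW17} for the $\calM$-component, and layer the three compatibility bookkeeping sets on top.

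The inductive step then splits along $\type(t)\in\{\leaf,\intr,\rem,\join\}$, matching the four branches of Listing~\ref{alg:prim}. For $\leaf$ nodes the invariant holds trivially: $\Pi_{\leq t}$ is empty, $\emptyset$ is its unique (and smallest) model, so $\calM=\{\langle\emptyset,\emptyset\rangle\}$ and $\calK=\calS=\emptyset$ as in Line~\ref{line:leafp}. For $\intr$ nodes I distinguish the non-epistemic case (Line~\ref{line:IntrUpdt}/\ref{line:IntrSucc}), where $\updtCands$ and $\survivalSets$ correctly extend each witness/counterwitness by guessing whether the new atom~$a$ is in, re-validating all bag rules in $\calR_t$ via $\updtCand$, and preserving the ``each $\calS$-tuple has a successor'' property; and the epistemic case (Lines~\ref{line:IntrUpdtE}--\ref{line:Condition4}), where additionally the three-way guess $a\in J^P$/$J^N$/$J^U$ is made, Conditions~\ref{def:compat:2} and~\ref{def:compat:3} are enforced by shunting offending $\calM$-tuples into~$\calK$ (Line~\ref{line:Condition23}), and Condition~\ref{def:compat:4} is seeded by demanding a successor that is true and one that is false when $a\in J^U$ (Line~\ref{line:Condition4}). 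For $\rem$ nodes (Line~\ref{line:removeE}) I argue that projecting~$a$ out of all rows is sound because, by the connectedness property (iii) of tree decompositions, $a$ appears in no rule covered above~$t$, so its ``role'' is finalized; here the $\calK$- and $\calS$-accounting must be shown to be monotone under removal, which is where Definition~\ref{def:proving} does its work. For $\join$ nodes (Lines~\ref{line:joinE}/\ref{line:joinSurvive}) I show that pairing rows with the \emph{same} $I$ and compatible witness structure exactly realizes the fact that $\Pi_{\leq t}$ splits into the two subtree programs sharing only the atoms in $\chi(t)$; the subtle point is that $\calS$-survival must be checked on both sides, since an unknown atom could be witnessed-true in one branch and witnessed-false only in the other.

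Finally, at the root~$n$ we have $\chi(n)=\emptyset$ and $\Pi_{\leq n}=\Pi$, so the invariant reduces to: $\tab{n}$ contains a row whose $\calM$ has a proving tuple (some $M$ extends to an answer set of $\Pi^{\hat I}$, giving Condition~\ref{def:compat:1}), whose $\calK$ has \emph{no} proving tuple (no witness ever violated Conditions~\ref{def:compat:2}/\ref{def:compat:3}), and whose $\calS$ has no non-proving tuple (every unknown atom acquired both a true- and a false-witness, giving Condition~\ref{def:compat:4})\,---\,which is precisely Definition~\ref{def:proving}. By Definition~\ref{def:compat} this is equivalent to $\hat I$ being a CWV, and since WV existence coincides with CWV existence (WVs are subset-minimal CWVs), the claim follows. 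The main obstacle I anticipate is getting the $\calS$-component invariant \emph{exactly} right through the $\rem$ and $\join$ cases: one must pin down the precise bookkeeping guaranteeing that a single surviving successor per $\calS$-tuple suffices to certify Condition~\ref{def:compat:4} globally (rather than accidentally over- or under-counting), and show that $\survivalSets$ together with the $\join$ survival check in Line~\ref{line:joinSurvive} neither drops a needed witness nor spuriously blocks a valid CWV; the rest of the induction follows the well-trodden ASP dynamic-programming template of \cite{ijcai:JaklPW09,lpnmr:FichteHMW17}.
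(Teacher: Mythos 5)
Your plan follows exactly the route the paper itself indicates (the paper gives no detailed proof here, deferring to the standard invariant-based induction for TD dynamic programming over ASP in the style of \cite{lpnmr:FichteHMW17,ijcai:JaklPW09}): a node-local invariant characterizing each row of $\tab{t}$ as the $\chi(t)$-projection of a partial CWI together with its witness/counterwitness structure over the sub-program below $t$, proved by induction over the four node types and specialized at the (empty-bag) root to Definition~\ref{def:proving}. The one point you flag rather than fully resolve---the $\calS$-bookkeeping through $\rem$ and $\join$ nodes---is indeed the delicate part, but your identification of $\survivalSets$ and Line~\ref{line:joinSurvive} as the mechanisms, with completeness secured by branching over all admissible successor selections, is the intended resolution.
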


%\begin{proof}[Proof (Idea)]
%  %
%  Soundness follows by establishing an invariant~$I$ for each row in~$\tab{t}$,
%  assuming~$I$ holds for each table row of the children of $t$.
%  %
%  Completeness holds by establishing that all primal rows fulfilling invariant~$I$
%  are in child tables of $t$, assuming all primal rows fulfilling~$I$ are in table~$\tab{t}$.
%  %
%\end{proof}

\noindent Then, correctness of~$\algo{DP}_{\algo{EPRIM}}$, cf., Sec.~\ref{sec:torso}, is a special case.
%correctness for the algorithm~$\algo{DP}_{\algo{EPRIM}}$ from Section~\ref{sec:torso}.

\begin{corollary}[Correctness of $\algo{DP}_\algo{EPRIM}$]
  Let $\Pi$ be an ELP~$\Pi$ and $\calT = (T, \chi)$ a TD of~$\EP_\Pi$.
  Then there is a row in table $\tab{n}$ for root~$n$ obtained
  by~$\algo{DP}_\algo{EPRIM}$ iff there is a WV for~$\Pi$.
\end{corollary}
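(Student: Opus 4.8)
The plan is to derive the statement as an instance of Theorem~\ref{thm:primcorrectness}, by exhibiting for the given nice TD $\calT=(T,\chi)$ of $\EP_\Pi$ an associated nice TD $\calT'=(T',\chi')$ of $\GP_\Pi$ such that the run of $\algo{DP}_\algo{EPRIM}$ on $\calT$ and the run of $\algo{DP}_\algo{PRIM}$ on $\calT'$ agree on whether the root table is "good''. First I would recall that a WV exists iff a CWV exists (WVs being the subset-minimal CWVs), so both algorithms really decide CWV existence, and that the epistemic reduct $\Pi^I$, and hence $\answersets{\Pi^I}$, depends only on the restriction of $I$ to literals over $\elitof{\Pi}$; this is what justifies $\algo{EPRIM}$ tracking only partial CWIs over $\chi(t)\subseteq\elitof{\Pi}$.

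Next I would build $\calT'$ from $\calT$ by a purely local surgery at each node $t$ that owns a non-empty induced bag program $\Pi_t^{\EP}$: replace $t$ by a chain of $\intr$-nodes that introduce all atoms of $\connvert{\chi(t)}\setminus\chi(t)$ (the non-epistemic atoms lying on connecting paths), followed by one node whose bag contains all of $\atomof{\calR_t^{\EP}}$, followed by a chain of $\rem$-nodes removing those atoms again. The fact making $\calT'$ a valid (nice) TD of $\GP_\Pi$ is the one already established before the statement: each rule $r$ and each atom $a\in\atomof{r}\setminus\elitof{r}$ is confined to the unique induced bag program of $r$'s owner, while every epistemic atom occurring in $r$ lies in that owner's bag. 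Hence every edge of $\GP_\Pi$ is covered (edges between two epistemic atoms already appear in some bag of $\calT$, since such atoms are non-epistemically connected via the length-$0$ path and thus adjacent in $\EP_\Pi$; all remaining edges appear inside a single expansion), and the connectedness condition is preserved because each freshly inserted non-epistemic atom occurs only within one expansion, while the epistemic atoms inherit their connectedness from $\calT$. Note that $\calT'$ may legitimately be chosen here, since Theorem~\ref{thm:primcorrectness} holds for every TD of $\GP_\Pi$.

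The core step is an induction along $\calT$ showing that, at every node of $\calT'$ corresponding to a non-owner node of $\calT$ the two traversals treat the CWI component identically, and that immediately after the expansion of an owner node $t$ has been fully processed, the primal rows produced by $\algo{PRIM}$ on $\calT'$ — once their $\calM,\calK,\calS$ components are restricted to $\chi(t)$ — are exactly those whose CWI $I$ survives the four checks of Definition~\ref{def:compat} that $\algo{EPRIM}$ performs at $t$ against $\answersets{(\Pi_t^{\EP})^{I}}$. Since all non-epistemic atoms are introduced and removed inside one expansion, they never occur in any $\calM,\calK,\calS$ outside it, so the answer-set-tuple bookkeeping of $\algo{PRIM}$ that $\algo{EPRIM}$ replaces by black-box ASP solver calls (Proposition~\ref{prop:rowaspcalls}) amounts precisely to computing $\answersets{(\Pi_t^{\EP})^{I}}$ together with its relevant sub-collections; and $\algo{EPRIM}$'s eager discarding of a row whenever a check fails corresponds to the row failing to be \emph{proving} in the sense of Definition~\ref{def:proving}. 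Propagating this invariant to the root $n$ and invoking Theorem~\ref{thm:primcorrectness} then gives: the table $\tab{n}$ computed by $\algo{DP}_\algo{EPRIM}$ is non-empty iff the table $\tab{n}$ computed by $\algo{DP}_\algo{PRIM}$ contains a proving row iff $\Pi$ has a WV.

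The main obstacle is exactly this inductive step: one must verify that the compatibility conditions of Definition~\ref{def:compat}, which speak about the global collection $\answersets{\Pi^I}$, can be settled node by node against the local collections $\answersets{(\Pi_t^{\EP})^{I}}$, and that no data beyond the partial CWI needs to be propagated between owner nodes. This forces one to lean carefully on the structural facts that distinct induced bag programs share only epistemic atoms and that each such atom always lies in its owner's bag, so that $\algo{PRIM}$'s witness/counterwitness machinery across a join or the removal of an epistemic atom collapses, after restriction to the bag, to exactly the information $\algo{EPRIM}$ retains. Handling join nodes correctly with CWIs alone is the most delicate case and is where the specific shape of $\EP_\Pi$ — separators consisting solely of epistemic atoms, with all mediating non-epistemic atoms local to a single owner — must be exploited.
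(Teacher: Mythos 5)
Your proposal is correct and follows essentially the same route as the paper: the paper's (sketched) proof likewise turns the given TD $\calT$ of $\EP_\Pi$ into a TD $\calT'$ of $\GP_\Pi$ by enlarging each bag with the atoms of its induced bag program, and then observes that $\algo{DP}_\algo{EPRIM}$ on $\calT$ is a simplification of $\algo{DP}_\algo{PRIM}$ on $\calT'$, so that correctness follows from Theorem~\ref{thm:primcorrectness}. Your version merely spells out the details the paper leaves implicit (making $\calT'$ nice via introduce/remove chains, verifying the TD conditions, and the node-by-node correspondence), all of which rest on the same structural facts the paper establishes before the corollary.
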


\begin{proof}[Proof (Idea)]
  %
  %Observe that 
  $\calT$ can be turned into a TD $\calT'$ of~$\GP_\Pi$ %obtained 
  by adding to each~$\chi(t)$ the set $\calA$, where~$(\Pi_t^{\EP})=(\calA,
  \cdot)$.
  $\algo{DP}_\algo{EPRIM}$ on~$\calT$ is, therefore, just a simplification of
  $\algo{DP}_\algo{PRIM}$ on~$\calT'$.
\end{proof}

%Next, we briefly discuss running time results of $\algo{DP}_{\algo{PRIM}}$.
%algorithm.

\begin{theorem}[Runtime]
  $\algo{DP}_\algo{PRIM}$ runs in time~$2^{2^{2^{\calO(k)}}}\cdot|\calA|$ for
  epistemic program~$\Pi=(\calA,\calR)$, and treewidth~$k$ of~$\GP_\Pi$.
\end{theorem}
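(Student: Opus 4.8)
The plan is to establish the triple-exponential bound $2^{2^{2^{\mathcal O(k)}}} \cdot |\calA|$ by a careful accounting of (i) the size of a single primal row at a node $t$ with $|\chi(t)| \le k+1$, and (ii) the number of distinct primal rows per table, together with (iii) the cost of each bag-algorithm step; then one multiplies by the linear number of TD nodes. First I would recall that a nice TD of $\GP_\Pi$ of width $k$ with $\mathcal O(|\calA|)$ nodes can be computed in FPT time (Bodlaender), so the outer factor $|\calA|$ is justified and we may from now on fix attention on a single node.

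The core counting I would carry out in layers. A single answer set tuple $\langle M, \calC\rangle$ at $t$ consists of a witness $M \subseteq \chi(t)$, so there are $2^{k+1}$ choices for $M$, and a counterwitness set $\calC \subseteq 2^{\chi(t)}$, giving at most $2^{2^{k+1}}$ choices for $\calC$; hence there are at most $2^{k+1}\cdot 2^{2^{k+1}} = 2^{2^{\mathcal O(k)}}$ answer set tuples overall. The sets $\calM, \calK$ in a primal row are arbitrary sets of answer set tuples, so each ranges over at most $2^{2^{2^{\mathcal O(k)}}}$ possibilities; $\calS$ ranges over the same bound, and the CWI component $I$ contributes only a further factor $3^{k+1}$, which is absorbed. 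Thus the number of distinct primal rows per table, and a fortiori the table size $|\tab{t}|$, is at most $2^{2^{2^{\mathcal O(k)}}}$. I would then argue that each of the operations in Listing~\ref{alg:prim} — the updates $\updtCand$, $\intrCand$, $\updtCands$, the kill step of Lines~\ref{line:Condition23}--\ref{line:Condition4}, the removal projection of Line~\ref{line:removeE}, and the join of Lines~\ref{line:joinE}--\ref{line:joinSurvive} (including enumerating the families $\survivalSets$) — touches at most polynomially many objects in the table/row sizes already bounded, and in particular involves only model checks over the bag program $\Pi_t$ whose size depends only on $k$; since a composition of elementary operations on structures of size $2^{2^{2^{\mathcal O(k)}}}$ stays within $2^{2^{2^{\mathcal O(k)}}}$, the per-node work is $2^{2^{2^{\mathcal O(k)}}}$. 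Multiplying by the $\mathcal O(|\calA|)$ nodes yields the claimed bound.

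The step I expect to be the main obstacle is verifying that the join-node computation, in particular the enumeration implicit in $\survivalSets(a, \calM, \calC, \scrP)$ and the compatibility matching of $\calS_1, \calS_2$ in Line~\ref{line:joinSurvive}, does not blow the bound up by a further exponential. The delicate point is that $\survivalSets$ ranges over subsets $\calS' \subseteq \calM$ of a fixed cardinality $|\calS|$; naively this is $\binom{|\calM|}{|\calS|}$, but since $|\calM| \le 2^{2^{\mathcal O(k)}}$ this is still bounded by $2^{|\calM|} \le 2^{2^{2^{\mathcal O(k)}}}$, so it remains within budget — but one must make this explicit rather than hand-wave it. Similarly, at a join we combine pairs of child rows, of which there are $(2^{2^{2^{\mathcal O(k)}}})^2 = 2^{2^{2^{\mathcal O(k)}}}$, and for each pair the survival-matching requires iterating over families of the above size; the product of two triple-exponential quantities in $k$ is again triple-exponential in $k$, which is the crucial (if routine) observation that keeps everything from collapsing. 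I would close by noting that correctness (Theorem~\ref{thm:primcorrectness}) guarantees the algorithm is in fact doing the right thing, so the runtime analysis above is all that remains, and that the same analysis specialized via the corollary recovers the bound for $\algo{DP}_{\algo{EPRIM}}$ as well.
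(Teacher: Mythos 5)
Your counting argument is correct and is essentially the (implicit) argument behind the paper's runtime theorem, which is stated without an explicit proof: bound the number of answer set tuples by $2^{2^{\calO(k)}}$, hence rows and tables by $2^{2^{2^{\calO(k)}}}$, verify that each bag operation --- including the $\survivalSets$ enumeration and the join, which you rightly single out as the only places a further blow-up could occur --- stays within this bound, and multiply by the $\calO(|\calA|)$ nodes of a nice TD. The one step worth making explicit is your claim that the bag program's size ``depends only on $k$'': this holds because the number of syntactically distinct ELP rules over a bag of $k+1$ atoms is single-exponential in $k$, so after deduplication the per-node model checks are indeed bounded by a function of $k$ alone, which is needed for the stated bound to depend on $|\calA|$ rather than on $|\calR|$.
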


%\begin{proof}[Proof (Idea)]
%  %
%  We compute~\cite{jal:BodlaenderK96} TD~$\calT$ of width~$k$ in
%  $2^{\calO(k^3)}$.
%  %
%  Each table~$\tab{t}$ has at most
%  $2^{(3^k\cdot2^{2^k}\cdot2^{2^k}\cdot2^{2^k})}$ rows, and the size of
%  $\Pi_t^{\EP}$ is in $\mathcal{O}(2^{k!})$.
%  %
%\end{proof}

Indeed, under reasonable assumptions in computational complexity, that is, the
\emph{exponential time hypothesis (ETH)}~\cite{jcss:ImpagliazzoPZ01}, one cannot
significantly improve~$\algo{DP}_\algo{PRIM}$
since~$\algo{DP}_\algo{PRIM}$ is ETH-tight.

\begin{proposition}[cf.\ \cite{soda:FichteHP20}, Theorem~19]
  Let $\Pi = (\calA, \cdot)$ be an epistemic logic program with $\GP_\Pi$ of
  treewidth~$k$.
  Then, unless ETH fails, WV existence for~$\Pi$ cannot be decided in
  time~$2^{2^{{2^{o(k)}}}}\cdot2^{o(|\calA|)}$.
\end{proposition}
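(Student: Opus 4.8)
The plan is to establish the lower bound by a standard ETH-hardness reduction, leveraging the known complexity-theoretic tight bound for (disjunctive) answer set programming parameterized by treewidth, which is exactly the content cited from~\cite{soda:FichteHP20}. The key observation is that $\SIGMA{3}{P}$-hard problems of the ``three-nested-quantifier'' flavor admit lower bounds of the form $2^{2^{2^{o(k)}}}$ under ETH, mirroring the triple-exponential upper bound of $\algo{DP}_\algo{PRIM}$. Concretely, I would start from a suitable $\SIGMA{3}{P}$-complete tiling or QBF-style problem whose natural encoding has treewidth linear in the original parameter, and for which a $2^{2^{2^{o(k)}}}\cdot 2^{o(n)}$ algorithm is known to contradict ETH (this is precisely Theorem~19 of~\cite{soda:FichteHP20}, stated there for a closely related formalism).

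The main steps, in order, are as follows. First, I would recall the reduction of~\cite{soda:FichteHP20} from a $3$-quantifier-alternation problem to deciding world-view existence of an ELP, taking care that the resulting program $\Pi = (\calA, \cdot)$ has an empty rule set in the sense used in the statement (or, more precisely, that the construction can be phrased so that $\GP_\Pi$ has treewidth $O(k)$ where $k$ is the parameter of the source instance). Second, I would verify that this reduction is parameter-preserving: the treewidth of the primal graph $\GP_\Pi$ is bounded by a linear function of the source parameter, and $|\calA|$ is linear in the source instance size. Third, I would argue the contrapositive: if WV existence for $\Pi$ could be decided in time $2^{2^{2^{o(k)}}}\cdot 2^{o(|\calA|)}$, then composing with the reduction would yield an algorithm for the $\SIGMA{3}{P}$-complete source problem running in time $2^{2^{2^{o(k')}}}\cdot 2^{o(n')}$ (where $k', n'$ are the source parameter and size), contradicting the ETH-based lower bound established in~\cite{soda:FichteHP20}.

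The point I expect to require the most care is the treewidth-preservation in step two, since the three-level exponential blow-up is very sensitive to how the parameter is transported: a reduction that only guarantees $\tw{\GP_\Pi} = O(k^2)$ or worse would weaken the bound to $2^{2^{2^{o(\sqrt{k})}}}$ and break tightness against the upper bound. The cleanest route is to reuse verbatim the construction of~\cite{soda:FichteHP20}, whose primal graph (or the analogous structure there) already has treewidth linear in the source parameter, and simply observe that the epistemic literals in $\Pi$ do not increase $\tw{\GP_\Pi}$ beyond this bound, so that the matching upper bound of $\algo{DP}_\algo{PRIM}$ applies to exactly the same graph. Given the cited theorem does the heavy lifting, the remaining argument is a routine check that the reduction and the parameter bound carry over unchanged, which is why the statement is phrased as ``cf.\ \cite{soda:FichteHP20}, Theorem~19.''
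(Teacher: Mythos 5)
The paper itself offers no proof of this proposition: it is imported wholesale from \cite{soda:FichteHP20}, and your reconstruction---take the ETH-tight triple-exponential lower bound for $\SIGMA{3}{P}$-style quantified problems of bounded treewidth established there, transport it through a reduction to WV existence that keeps $\tw{\GP_\Pi}$ linear in the source parameter and $|\calA|$ linear in the source size, and conclude by contraposition---is exactly the intended route, so the two approaches coincide. The one point worth tightening is your opening claim that problems of the ``three-nested-quantifier flavour'' generically admit $2^{2^{2^{o(k)}}}$ lower bounds under ETH: this is not a generic fact but precisely the compression theorem of \cite{soda:FichteHP20} for QSAT with three quantifier blocks parameterized by treewidth, and the substantive content of the proposition is therefore the treewidth- and size-preserving reduction from that problem to ELP world-view existence (outer $\exists$ via the epistemic guess, $\forall$ via compatibility over all answer sets, inner $\exists$ via subset minimality), which you correctly single out as the delicate step but leave unconstructed.
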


\section{Conclusions}\label{sec:conclusions}

%First work that provides parameterized complexity analysis
%of ELPs in terms of treewidth. Tree decompositions have been 
%implicitly used in the ELP system ``selp'' \cite{ijcai:BichlerMW18}. This system
%translates ELPs into standards non-ground programs and uses
%decomposition to split long rules, and thus exploits this concept 
%for a different purpose.
%
%Discussion: difference to splitting as recently studied.
%
%
%Future work: 
%(1) Extend DP algorithms to formula evaluation problem. 
%(2) Go beyond Shen/Eiter: We anticipate no particular obstacles, neither in MSO 
%characterisations nor in the dynamic programming approaches.

This work provides the first parameterized complexity analysis of ELP solving
w.r.t.\ treewidth. Tree decompositions (TDs) have been successfully used in the
\emph{selp} ELP solver \cite{ijcai:BichlerMW18}, but for a different purpose,
namely that ELPs are rewritten into non-ground ASP programs with long rules,
which are then split up using \emph{rule decomposition} \cite{tplp:BichlerMW16}.
Our approach partitions an ELP %into several partial ELPs 
according to a TD, and then solves the entire ELP by evaluating these parts in turn.
Note that this is different from %the %recently proposed 
%notion of 
(ELP) splitting~\cite{lpnmr:CabalarFC19,iclp:LifschitzT94}.
%, which extends the splitting theorem for ASP
%\cite{iclp:LifschitzT94}.

For future work, we aim to extend our DP algorithm to the formula evaluation
problem, which, viz.\ Theorem~\ref{thm:elpentailmentfpt}, should work in a
similar fashion to our existing algorithms, given a suitable graph
representation. Furthermore, we would like to apply our approach to other ELP
semantics; cf.\ \cite{aaai:Gelfond91,lpnmr:Gelfond11,logcom:KahlWBGZ15}. %Looking
%at our MSO encoding and DP algorithms
There, we do not anticipate large obstacles,
since most semantics are reduct-based, and the reduct is an easily exchangeable part in
our algorithms. % (denoted $\scrP$).

\section*{Acknowledgements} M.\ Hecher and S.\ Woltran were supported by the
Austrian Science Fund (FWF) under grants Y698 and P32830.
Final version of this document will be available under AAAI proceedings.

\bibliographystyle{abbrvnat}
\bibliography{references}

%%%%%%%%%%%%
% APPENDIX %
%%%%%%%%%%%%
%\appendix
%\clearpage
%\onecolumn
%\input{A-appendix.tex}

\end{document}